\newtheorem{theorem}{Theorem}
\newtheorem{definition}[theorem]{Definition}
\newtheorem{example}[theorem]{Example}
\newtheorem{lemma}[theorem]{Lemma}
\newenvironment{proof}[1][Proof]{\noindent\textbf{#1.} }{{\hfill $\Box$ \\}}
\DeclareSymbolFont{cmlargesymbols}{OMX}{cmex}{m}{n}
\DeclareMathSymbol{\mycoprod}{\mathop}{cmlargesymbols}{"60}
\newcommand{\pto}{\rightharpoonup}
\newcommand{\To}{\Longrightarrow}
\newcommand{\Bid}{\tilde{B}}
\newcommand{\Bb}{B^{\infty}}
\newcommand{\Sid}{\bar{\S}}
\newcommand{\Ss}{\Sigma^*}
\renewcommand{\S}{\Sigma}
\newcommand{\id}{\textrm{id}}
\newcommand{\Id}{\textrm{Id}}
\newcommand{\goes}[1]{\stackrel{#1}{\longrightarrow}}
\newcommand{\labA}{\$}
\newcommand{\labB}{\text{\euro}}
\newcommand{\Pf}{{\cal P}_{\omega}}
\newcommand{\abstractgoes}[2]{%
\setbox0=\hbox{\ ${\scriptstyle#2}$\ }
\ifdim\wd0<12pt\wd0=12pt\fi
\mathrel{\stackrel{#2}{\rule[2.2pt]{\wd0}{0.6pt}}\mkern-6mu{#1}}}
\newcommand{\qmgoes}[1]{\abstractgoes{\RHD}{#1}}
\newcommand{\mqmgoes}[1]{\abstractgoes{\rhd}{#1}}
\newcommand{\sosngoes}[1]{%
\setbox0=\hbox{\ ${\scriptstyle#1}$\ }
\ifdim\wd0<12pt\wd0=12pt\fi
\mathrel{\stackrel{#1}{\rule[2.2pt]{0.7\wd0}{0.6pt}}\mkern-7mu \mathord{/}%
\mkern-7mu\rule[2.2pt]{0.3\wd0}{0.6pt}\mkern-0mu\triangleright}}
\title{Distributive Laws \\ and Decidable Properties of SOS Specifications\footnote{This work was supported by the Polish National Science Centre (NCN) grant 2012/07/E/ST6/03026.}}
\author{Bartek Klin
\institute{University of Warsaw}
\email{klin@mimuw.edu.pl}
\and
Beata Nachy\l{}a
\institute{Institute of Computer Science, 
Polish Academy of Sciences}
\email{beatanachyla@gmail.com}
}
\begin{document}
\maketitle

\begin{abstract}
Some formats of well-behaved operational specifications, correspond to natural transformations of certain types (for example, GSOS and coGSOS laws). These transformations have a common generalization: distributive laws of monads over comonads. We prove that this elegant theoretical generalization has limited practical benefits: it does not translate to any concrete rule format that would be complete for specifications that contain both GSOS and coGSOS rules. This is shown for the case of labeled transition systems and deterministic stream systems.

\end{abstract}

\vspace{-3ex}
\section{Introduction}

Distributive laws (see~\cite{turiplotkin,tcs11} for more information) are an abstract approach to several kinds of well-behaved operational specifications. For example, for a fixed set $A$ of labels, a family of inference rules
\[
	\dfrac{x\goes{a}x'\quad y\goes{a}y'}{x\otimes y\goes{a}x'\otimes{y'}} \qquad (\mbox{for }a\in A)
\]
that define synchronous composition over labeled transition systems (LTSs), can be presented as a natural transformation $\lambda:\S B\To B\S$ (a distributive law of $\S$ over $B$), where $\Sigma X=X\times X$ and $BX = \Pf(A\times X)$ are functors on the category $\textbf{Set}$ of sets and functions. Similarly, a family of rules
\[
	\dfrac{x\goes{a}x' \quad y\goes{b}y'}{x\rtimes y \goes{a}y'\rtimes x'} \qquad (\mbox{for }a,b\in A)
\]
that define an alternating composition operator $\rtimes$ on infinite streams of labels,
can be understood as a transformation $\lambda:\S B\To B\S$ where $\Sigma X=X\times X$ again, and $BX=A\times X$.

Typically $\S$ is a polynomial functor arising from an algebraic signature. Specifications that give rise to distributive laws of $\S$ over $B$ enjoy several desirable properties: they induce a $B$-coalgebra (e.g.~an LTS) on the carrier of the initial $\Sigma$-algebra (the algebra of $\Sigma$-terms) so that bisimilarity is a congruence, and they provide an interpretation of the signature on the final $B$-coalgebra (provided that it exists).

These desirable properties extend to other, more expressive types of laws, including:
\begin{itemize}
\item[(a)] {\em GSOS laws} $\rho:\S(B\times\Id)\To B\Ss$, where $\Ss$ is the free monad over $\S$ (see Section~\ref{sec:algebras}),
\item[(b)] {\em coGSOS laws} $\rho:\S\Bb\To B(\Id+\S)$, where $\Bb$ is the cofree comonad over $B$ (see Section~\ref{sec:coalgebras}),
\item[(c)] {\em distributive laws of monads over comonads}, i.e., natural transformations $\lambda:\Ss\Bb\To\Bb\Ss$ subject to a few axioms.  (In this paper we only consider distributive laws of free monads over cofree comonads, see Section~\ref{sec:dist-laws}.)
\end{itemize}
GSOS and coGSOS laws are incomparable, i.e., there are specifications that conform to one type but not the other, and distributive laws of monads over comonads are a common generalization of both. From now on, for brevity, we shall call them simply {\em distributive laws}. 

For standard examples of $B$, GSOS and coGSOS laws correspond to {\em rule formats}, i.e., syntactic restrictions on the form of inference rules that are allowed in a specification for it to define a corresponding type of law. For $BX=\Pf(A\times X)$, where $\Pf$ is the finite powerset functor, it was observed in~\cite{turiplotkin} that GSOS laws correspond to previously known GSOS~\cite{gsos} specifications (hence the name of the law type), that allow rules such as:
\[
	\dfrac{x_1\goes{a_{1,1}}y_{1,1} \quad x_1\goes{a_{1,2}}y_{1,2} \quad\cdots\quad x_i\goes{a_{i,j}}y_{i,j} \quad \cdots\quad x_i\not\goes{b_{i,j}}\quad\cdots}{{\tt f}(x_1,\ldots, x_k)\goes{b} {\tt t}}
\]
where variables $x_i$ can be tested for the presence and/or absence of transitions labeled with different labels, and the resulting transition can go to an arbitrary term ${\tt t}$ built over the variables $x_i$ and $y_{i,j}$. On the other hand, coGSOS laws for the same functor $B$ are induced by {\em safe ntree}~\cite{ntree,turiplotkin} specifications, where additionally {\em lookahead} is allowed, i.e., variables that are targets of premise transitions can be further tested for other transitions as in the rule:
\[
	\dfrac{x\goes{a} y \goes{b}z}{{\tt f}(x)\goes{c}{\tt g}(z)}
\]
On the other hand, coGSOS is restricted in that the target term ${\tt t}$ in the conclusion must be either a variable or a flat term built of a single operation symbol and variables.

Both GSOS and coGSOS laws, are generalized by distributive laws. In fact, desired properties of systems induced by GSOS and coGSOS laws were proved in~\cite{turiplotkin} by showing first that these laws induce distributive laws, and then proving those properties for the latter, more general laws. This is tantalizing, as it suggests that for standard functors $B$ one could find new, more expressive syntactic rule formats that would correspond to distributive laws and hence guarantee good properties of specifications. The problem of finding such a format was left open in~\cite{turiplotkin} and mentioned as still open in later works~\cite{bartelsthesis,tcs11}.

The purpose of this paper is to suggest a negative answer to that problem. Specifically, we claim that there is no rule format
that would adequately recognize those specifications that induce distributive laws of monads over comonads, within a class of specifications that extends both GSOS and coGSOS.

This claim is rather vague, and we must make it precise before we attempt to prove it. First of all, there is no hope to prove it for all monads and comonads; clearly, for some trivial monads and comonads all distributive laws are easily enumerated, and even for some nontrivial comonads a complete description of distributive laws is known~\cite{kickformat}. 
Therefore in this paper we shall consider laws $\lambda:\Ss\Bb\to\Bb\Ss$ for $\Ss$ the free monad over a polynomial functor $\S$, and $\Bb$ the cofree comonad over
 $BX = A\times X$, pertaining to stream systems, or
$BX = \Pf(A\times X)$, pertaining to labeled transition systems.
Hopefully it shall be clear how our arguments for the lack of expressive formats for these two behaviour functors, might extend to other standard functors used to model transition systems coalgebraically.

To make our claim precise, the first question we need to answer is: {\em what is a format?} In positive results about GSOS and coGSOS laws mentioned above, the answer was easy: one simply formulated some ``syntactic forms'' of rules and provided ways of defining laws from sets of rules that conformed to them. Now that we want a negative result, we need to quantify over all ``syntactic forms'', so we need to understand what a syntactic form is in general. We opt for a general and permissive answer:
{\em a format is a decidable property of specifications.} Indeed, no matter what a ``format'' may be, it should be effectively checkable whether a specification conforms to it. 

This leads to another question: {\em what is a specification?} Some definitions of this term would immediately invalidate our claim; for example, if we say that ``a specification is either a GSOS specification or a coGSOS specification'', then every specification induces a distributive law as described already in~\cite{turiplotkin} and the problem is trivially decidable. However, we are interested in more permissive notions of specification that would allow a more substantial combination of GSOS and coGSOS features.
We therefore focus on {\em mixed-GSOS specifications}, where every rule is either a GSOS or a coGSOS rule.

Note that there are other interesting notions of specification where the claim becomes false. For example, as proved in~\cite{statontyft}, in the context of LTSs one may consider so-called (positive) {\em tyft/tyxt}~\cite{tyft} specifications, and guarantee the existence of a distributive law for every specification. However, tyft/tyxt specifications extend neither full GSOS nor coGSOS, so this does not match the abstract observation that distributive laws generalize both GSOS laws and coGSOS laws.

There is still one vague point in our claim: {\em what does it mean for a specification to induce a distributive law?}
In positive results about GSOS and coGSOS specifications~\cite{tcs11,turiplotkin}, one simply provides particular ways of inducing distributive laws from specifications that look so natural that everybody is convinced. Here, to show undecidability, we shall need to prove that some instances do {\em not} induce distributive laws, so we need to quantify over all possible ``ways of inducing laws'', a  vague notion itself.

We approach this problem by observing that every mixed-GSOS specification induces, in a very natural way, a natural transformation $\rho:\S \Bb\To B\Ss$ which we call a {\em biGSOS law}. Then we define (Definition~\ref{def:extn}) what it means for a distributive law  $\lambda$ to {\em extend} a biGSOS law $\rho$; essentially, $\lambda$ must restrict to $\rho$ when composed with obvious inclusions and projections. Our claim then becomes:

\noindent
{\bf Claim.} 
{\em It is undecidable whether a given mixed-GSOS specification extends to a unique distributive law.}

One may worry whether our insistence on a unique extension is not overly restrictive. Indeed, perhaps sometimes a specification may extend to several distributive laws, but one of these laws is somehow better than the other ones, for example (in the LTS setting) the least one, or canonical in some other way? However, as will be evident from our proofs, this is not a problem: all our instances of specifications will either extend to one distributive law or to none at all, therefore no matter what notion of ``canonical extension'' one may come up with, the problem remains undecidable.

We prove undecidability by reduction from the halting problem of a variant of queue machines defined in Section~\ref{sec:queue}. Then, in Section~\ref{sec:qm2ss}, we prove the Claim for the case of stream systems (Theorem~\ref{thm:ssundec}), and in Section~\ref{sec:qm2lts} we explain how the proof is adapted to the case of LTSs (Theorem~\ref{thm:ltsundec}).


\section{Preliminaries}

The reader should be familiar with notions of category theory such as functors and natural transformations, see e.g.~\cite{maclane}. 
All functors we consider are endofunctors on the category of sets and functions.

\subsection{Algebras and monads}\label{sec:algebras}

An {\em algebra} for a functor $\S$ is a set $X$ (the {\em carrier}) together with a function $g:\S X\to X$ (the {\em structure}). An algebra morphism from $g:\S X\to X$ to $h:\S Y \to Y$ is a function $f:X\to Y$ such that $f\circ g = h\circ \S f$. Algebras for $\S$ and their morphisms form a category. Of particular interest in this category are initial objects, i.e., initial $\S$-algebras.

Assume that, for any set $X$, an initial algebra for the functor $\S(-)+X$ exists, denote its carrier $\Ss X$ and its structure by:
\[
	\xymatrix{\S\Ss X\ar[r]^{\psi_X} & \Ss X & X\ar[l]_-{\eta_X}.}
\]
Then $\Ss$, defined on functions using initiality, becomes a functor and $\psi:\S\Ss\To\Ss$ and $\eta:\Id\To\Ss$ are natural transformations. Moreover, $\Ss$ is a {\em monad}, i.e., it is equipped with a natural transformation $\mu:\Ss\Ss\To\Ss$ such that the following diagrams commute:
\begin{equation}\label{eq:monadlaws}
\vcenter{
	\xymatrix{\Ss\ar@{=>}[r]^-{\Ss\eta}\ar@{=}[rd] & \Ss\Ss\ar@{=>}[d]^{\mu} & \Ss\ar@{=>}[l]_-{\eta\Ss}\ar@{=}[ld] & & \Ss\Ss\Ss\ar@{=>}[r]^{\Ss\mu}\ar@{=>}[d]_{\mu\Ss} & \Ss\Ss\ar@{=>}[d]^{\mu} \\ & \Ss & & & 
 \Ss\Ss\ar@{=>}[r]_{\mu} & \Ss.}
}
\end{equation}
$\Ss$ is called the {\em free monad} over $\S$. Another relevant transformation is $\iota:\S\To\Ss$ defined by $\iota = \psi \circ \S\eta$; it further satisfies the equation $\psi = \mu\circ\iota\Ss$.

\begin{example}\label{ex:syntax}\rm
Any algebraic signature $({\tt q}_i)_{i\in I}$, where each ${\tt q}_i$ is an operation symbol of arity $n_i\in\mathbb{N}$, gives rise to an endofunctor $\Sigma X = \mycoprod_{i\in I}X^{n_i}$. Then $\S$-algebras are algebras for the signature in the sense of universal algebra, and $\S$-algebra morphisms are exactly algebra homomorphisms. Moreover, $\Ss X$ is the set of terms over the signature with variables taken from $X$, $\eta$ interprets variables as terms, $\psi$ and $\mu$ glue together terms built of terms, and $\iota$ interprets terms built of single operation symbols as terms.
\end{example}

\subsection{Coalgebras and comonads}\label{sec:coalgebras}

The following development is dual to the one for algebras and monads; we include it for completeness and to introduce some basic terminology and notation. For more information about coalgebras, see~\cite{ruttentcs}.

A {\em coalgebra} for a functor $B$ is a set $X$ (the {\em carrier}) together with a function $g:X\to BX$ (the {\em structure}). A coalgebra morphism from $g:X\to BX$ to $h:Y \to BY$ is a function $f:X\to Y$ such that $h\circ f = Bf\circ g$. Coalgebras for $B$ and their morphisms form a category.

Assume that, for any set $X$, a final coalgebra for the functor $B(-)\times X$ exists, denote its carrier $\Bb X$ and its structure by:
\[
	\xymatrix{B\Bb X & \Bb X\ar[l]_-{\theta_X}\ar[r]^-{\epsilon_X} & X.}
\]
Then $\Bb$, defined on functions using finality, becomes a functor and $\theta:\Bb\To B\Bb$ and $\epsilon:\Bb\To\Id$ are natural transformations. Moreover, $\Bb$ is a {\em comonad}, i.e., it is equipped with a natural transformation $\delta:\Bb\To\Bb\Bb$ such that diagrams dual to~\eqref{eq:monadlaws} commute.
$\Bb$ is called the {\em cofree comonad} over $B$. Another relevant transformation is $\pi:\Bb\To B$ defined by $\pi = B\epsilon \circ \theta$; it further satisfies the equation $\theta = \pi\Bb\circ\delta$.

\begin{example}\label{ex:stream-coalg}\rm
Let $BX = A\times X$, for a fixed set $A$ of {\em labels}. $B$-coalgebras are {\em stream systems}, i.e., sets $X$ (of {\em states}) equipped with functions to $A$ and to $X$ again; the intuition is that a state produces a label and transforms into another state. The cofree comonad over $B$ is given by $\Bb X = (X\times A)^{\omega}$; we will depict elements of $\Bb X$ as streams of labeled transitions:
\[
	\Bb X\ni \sigma = x_0 \goes{a_0} x_1 \goes{a_1} x_2 \goes{a_2} x_3 \goes{a_3} \cdots
\]
with $x_i\in X$ and $a_i\in A$. For any $n\in\mathbb{N}$, by $\sigma^{(n)}\in\Bb X$ denote the $n$-th tail of $\sigma$, i.e., the substream of $\sigma$ that starts at $x_n$. Natural transformations explained above are then given by:
\begin{align*}
	\epsilon_X(\sigma) &= x_0 & \theta_X(\sigma) &= \big( a_0, \sigma^{(1)} ) \\
	\delta_X(\sigma) &= \big( \sigma \goes{a_0} \sigma^{(1)} \goes{a_1} \sigma^{(2)} \goes{a_2} \sigma^{(3)} \goes{} \cdots ) & 
	\pi_X(\sigma) &= (a_0,x_1)
\end{align*}
One may look at elements of $\Bb X$ as streams of labels ``colored'' with elements of $X$; elements of $\Bb\Bb X$ are then streams colored by streams, and $\delta_X(\sigma)$ is the stream that arises from $\sigma$ by coloring each node with the substream of $\sigma$ that starts in it.
\end{example}

\begin{example}\rm
Let $\Pf$ denote the finite powerset functor, and let $BX ={\cal P_\omega}(A\times X)$, for a fixed set $A$ of labels. $B$-coalgebras are {\em (finitely branching) labeled transition systems}.
The cofree comonad over $B$ is a functor $\Bb$ that maps a set $X$ to the set of finitely branching, but possibly infinitely deep trees, edge-labeled with elements of $A$ and node-colored by elements of $X$, quotiented by a version of strong bisimilarity that takes into account both edge labels and node colors.


Natural transformations listed above are defined by analogy to Example~\ref{ex:stream-coalg}. For a tree ${\tt T}\in\Bb X$:
\begin{itemize}
\item $\epsilon_X({\tt T})\in X$ is the color of the root node of ${\tt T}$,
\item $\delta_X({\tt T})\in \Bb\Bb X$ arises from ${\tt T}$ by coloring every node with the subtree rooted in it,
\item $\theta_X({\tt T})\in B\Bb X$ is the set of immediate subtrees of the root together with labels of the edges that lead to these subtrees,
\item $\pi_X({\tt T})\in BX$ is similar, but with the immediate subtrees replaced by the colors of their roots.
\end{itemize}
A little care is needed to show that components of these transformations are well-defined on bisimilarity classes of trees. For example, if ${\tt T}_1$ and ${\tt T}_2$ are related by a bisimulation, then $\delta_X({\tt T}_1)$ and $\delta_X({\tt T}_2)$ also are, as bisimilar nodes get assigned the same colors (here colors are bisimilarity classes of trees).


\end{example}

\subsection{GSOS and coGSOS laws}\label{sec:gsos}

Algebras, coalgebras, monads and comonads can be combined in distributive laws of various kinds. We only recall a few basic definitions and examples here; for a more comprehensive treatment see~\cite{tcs11}.

For any functor $B$, denote $\Bid=\Id\times B$.

\begin{definition}\rm
Given endofunctors $\S$ and $B$ such that the free monad $\Ss$ over $\S$ exists, a {\em GSOS law} is a natural transformation $\rho:\S\Bid\To B\Ss$.
\end{definition}

\begin{example}\label{ex:gsos}\rm
Consider $BX=A\times X$, and let $\S X=X\times X$ arise from a signature with a single binary function symbol ${\tt zip}$. A family of rules
\[
	\dfrac{x\goes{a}x' \quad y\goes{b}y'}{{\tt zip}(x,y) \goes{a}{\tt zip}(y,x')} \qquad (\mbox{for }a,b\in A)
\]
together defines a GSOS law by:
\[
	\rho_X({\tt zip}((x,(a,x')),(y,(b,y')))) = (a,{\tt zip}(y,x'))
\]
for any $x,x',y,y'\in X$ and $a,b\in A$. 
%
\end{example}
\medskip

Dually, for any functor $\S$, denote $\Sid=\Id+\S$.

\begin{definition}\rm
Given endofunctors $\S$ and $B$ such that the cofree comonad $\Bb$ over $B$ exists, a {\em coGSOS law} is a natural transformation $\rho:\S\Bb\To B\Sid$.
\end{definition}

\begin{example}\label{ex:co-gsos}\rm
Consider $BX=A\times X$, 
and let $\S X=X$ arise from a signature with a single unary function symbol ${\tt q}$.
The family of rules (that define a unary operation that drops every second label from a given stream):
\begin{align*}
	\dfrac{x\goes{a_1}x'\goes{a_2}x''}{{\tt q}(x)\goes{a_2}{\tt q}(x'')}\ (\mbox{for } a_1,a_2\in A)
\end{align*}
together defines a coGSOS law by:
\[
	\rho_X({\tt q}(x\goes{a_1} x' \goes{a_2} x''\goes{a_3}\cdots)) = (a_2,{\tt q}(x''))
\]
for any $x,x',x'',\hdots \in X$ and $a_1,a_2,a_3,\hdots \in A$.
\end{example}


\begin{example}\label{ex:co-gsos-neg}\rm
Now, consider the LTS behaviour functor $BX={\cal P_\omega}(A\times X)$, and let $\S X = X$ as in Example~\ref{ex:co-gsos}. The rules:
\begin{align*}
	\dfrac{x\goes{a_1}x'\goes{a_2}x''}{{\tt q}(x)\goes{a_2}{\tt q}(x'')} \qquad 
	\dfrac{x\goes{a_1}x' {\not \goes{}}}{{\tt q}(x)\goes{a_1}{\tt q}(x)}\qquad (\mbox{for } a_1,a_2\in A)
\end{align*}
define a coGSOS law $\rho:\S\Bb\To B\Sid$, where $\rho_X({\tt q}({\tt T}))$ is the set of pairs $(a,{\tt q}(x))$ such that ${\tt T}$ has (1) a two-step path from the root to a node colored by $x$ with the second step labeled by $a$, or (2) a single step, labeled with $a$, to a leaf (i.e. a node without successors), and the root of {\tt T} is colored by $x$.
%
%
\end{example}

\subsection{Distributive laws}\label{sec:dist-laws}

In~\cite{turiplotkin} it was noticed that both GSOS and coGSOS laws are generalized by distributive laws of monads over comonads; in this paper we call them simply {\em distributive laws}.

\begin{definition}\label{def:dist-law}\rm
A distributive law of a monad $(\Ss,\eta,\mu)$ over a comonad $(\Bb,\epsilon,\delta)$ is a natural transformation $\lambda:\Ss\Bb\To\Bb\Ss$ subject to the following four axioms:
\[
\xymatrix{\Bb\ar@{=>}[d]_{\eta\Bb}\ar@{=>}@/^/[rd]^{\Bb\eta}\ar@{}[rdd]|(.3){\textrm{(i)}}\ar@{}[rdd]|(.7){\textrm{(ii)}} \\
\Ss\Bb\ar@{=>}[r]^{\lambda}\ar@{=>}@/_/[rd]_{\Ss\epsilon} & \Bb\Ss\ar@{=>}[d]^{\epsilon\Ss} \\
& \Ss} \qquad\qquad
\xymatrix{\Ss\Ss\Bb\ar@{=>}[r]^{\Ss\lambda}\ar@{=>}[d]_{\mu\Bb}\ar@{}[rd]|{\textrm{(iii)}}  & \Ss\Bb\Ss\ar@{=>}[r]^{\lambda\Ss}  & \Bb\Ss\Ss\ar@{=>}[d]^{\Bb\mu} \\
\Ss\Bb\ar@{=>}[d]_{\Ss\delta}\ar@{=>}[rr]^{\lambda}  &\ar@{}[rd]|{\textrm{(iv)}}& \Bb\Ss\ar@{=>}[d]^{\delta\Ss}  \\
\Ss\Bb\Bb\ar@{=>}[r]_{\lambda\Bb}  & \Bb\Ss\Bb\ar@{=>}[r]_{\Bb\lambda}  & \Bb\Bb\Ss}
\]
\end{definition}

\begin{example}\label{ex:dist-laws}\rm
Consider $BX=A\times X$ as in Example~\ref{ex:stream-coalg}, and a functor $\S X = \mycoprod_{q\in Q}X \cong Q\times X$ arising as in Example~\ref{ex:syntax} from an algebraic signature consisting of a set of unary operation symbols $Q = \{{\tt q}_1,\ldots,{\tt q}_k\}$. Then $\Bb X = (X\times A)^{\omega}$ and $\Ss X = Q^*\times X$;  for ${\tt t}\in Q^*$, we shall write ${\tt t}(x)$ instead of $({\tt t},x)\in \Ss X$, and simply $x$ instead of $\epsilon(x)$, for the empty string $\epsilon\in Q^*$.

For a distributive law $\lambda:\Ss\Bb\To\Bb\Ss$, the naturality condition means that if
\begin{align*}
	\lambda_X({\tt t}(x_0 \goes{a_0} x_1 \goes{a_1} x_2 \goes{a_2} \cdots)) &= \tau_0\goes{b_0}\tau_1\goes{b_1}\tau_2\goes{b_2}\cdots\quad \\
	\lambda_Y({\tt t}(y_0 \goes{a_0} y_1 \goes{a_1} y_2 \goes{a_2} \cdots)) &= \gamma_0\goes{c_0}\gamma_1\goes{c_1}\gamma_2\goes{c_2}\cdots\quad
\end{align*}
then for all $i\in\mathbb{N}$ one has $b_i=c_i$, and $\gamma_i\in\Ss Y$ arises from $\tau_i\in\Ss X$ by substituting each $x_j$ by the corresponding $y_j$. Informally, the value of $\lambda_X$ on ${\tt t}(\sigma)$ essentially depends only on the term ${\tt t}$ and on the labels in the stream $\sigma$, and the colors $x_j$ in $\sigma$ are merely rearranged into terms $\tau_k$ independently from their identity or structure.
This also implies that all elements from $X$ present in $\lambda_X({\tt t}(\sigma))$ must have been present in $\sigma$.

Further, the four axioms of Definition~\ref{def:dist-law} amount to:
\begin{itemize}
\item[(i)] $\lambda_X(x_0 \goes{a_0} x_1 \goes{a_1} x_2 \goes{a_2} \cdots) \,= \, x_0 \goes{a_0} x_1 \goes{a_1} x_2 \goes{a_2} \cdots$,
\item[(ii)] if $\lambda_X({\tt t}(x_0 \goes{a_0} x_1 \goes{a_1} x_2 \goes{a_2} \cdots)) \, = \, \tau_0\goes{b_0}\tau_1\goes{b_1}\tau_2\goes{b_2}\cdots\quad$ then $\tau_0={\tt t}(x_0)$,
\item[(iii)] if $\lambda_X({\tt s}(x_0 \goes{a_0} x_1 \goes{a_1} x_2 \goes{a_2} \cdots)) \quad\!\! = \, \tau_0\goes{b_0}\tau_1\goes{b_1}\tau_2\goes{b_2}\cdots\quad$ and \\
		  \hspace*{6pt} $\lambda_{\Ss X}({\tt t}(\tau_0 \goes{b_0} \tau_1 \goes{b_1} \tau_2 \goes{b_2} \cdots)) \, = \, \gamma_0\goes{c_0}\gamma_1\goes{c_1}\gamma_2\goes{c_2}\cdots\quad$ then \\
		  \hspace*{6pt} $\lambda_{X}({\tt ts}(x_0 \goes{a_0} x_1 \goes{a_1} x_2 \goes{a_2} \cdots)) \, = \, \gamma_0\goes{c_0}\gamma_1\goes{c_1}\gamma_2\goes{c_2}\cdots\quad$.
		  
Informally, $\lambda$ is defined compositionally with respect to $\S$-terms.
\item[(iv)] if $\lambda_X({\tt t}(x_0 \goes{a_0} x_1 \goes{a_1} x_2 \goes{a_2} \cdots)) \quad\!\! = \, \tau_0\goes{b_0}\tau_1\goes{b_1}\tau_2\goes{b_2}\cdots\quad$ then for every $i\in\mathbb{N}$, \\
$\lambda_X(\overline{\tau_i}) = \tau_i\goes{b_i}\tau_{i+1}\goes{b_{i+1}} \cdots$, where $\overline{\tau_i}\in\Ss\Bb X$ arises from $\tau_i\in\Ss X$ by replacing every $x_j$ with the stream starting at it. Informally, $\lambda_X$ is defined ``decompositionally'' with respect to streams.
\end{itemize}

\end{example}

\section{BiGSOS laws and mixed-GSOS specifications}

In~\cite{turiplotkin} it was proved that (1) every GSOS law induces a distributive law and (2) every coGSOS law induces a distributive law. The two ways of inducing distributive laws explained there are both natural and convincing, but formally different. We wish to study the problem of inducing distributive laws from specifications that would generalize both GSOS and coGSOS laws, so we need to have a general understanding of what it means to induce a distributive law. To this end, we consider the following simple generalization of GSOS and coGSOS:

\begin{definition}\label{def:bigsos}\rm
Given endofunctors $\S$ and $B$ such that the free monad $\Ss$ over $\S$ exists and the cofree comonad $\Bb$ over $B$ exist, a {\em biGSOS law} is a natural transformation $\rho:\S\Bb\To B\Ss$.
\end{definition}

GSOS and coGSOS laws give rise to biGSOS laws by composing with injections or projections:
\[
\xymatrix{
	\S\Bb\ar@{=>}[r]^{\S\langle{\epsilon,\pi}\rangle} & \S\Bid\ar@{=>}[r]^{\rho'} & B\Ss,
} \qquad
\xymatrix{
    \S\Bb\ar@{=>}[r]^{\rho''} & B\Sid\ar@{=>}[r]^-{B[\eta,\iota]} & B\Ss.
}
\]
where $\rho'$ is a GSOS law and $\rho''$ is a coGSOS law.
(Note that $\langle\epsilon,\pi\rangle:\Bb\To\Bid$ and $[\eta,\iota]:\Sid\To\Ss$ are natural transformations.)
As a result, biGSOS laws generalize both GSOS and coGSOS laws. However, they offer much more flexibility. In particular, for the case of stream systems and LTSs, we consider:

\begin{definition}\rm
A stream (or LTS) specification is {\em mixed-GSOS} if every rule in it is either a GSOS rule or a coGSOS rule, and moreover, for any operator ${\tt f}$, rules that define {\tt f} (i.e., those that have {\tt f} on the left side of the conclusion) are either all GSOS or all coGSOS.
\end{definition}

Note that we allow coGSOS-defined operations in conclusions of GSOS rules (and vice versa), so that e.g. the specification in Example~\ref{ex:ivbuqwvw} below is mixed-GSOS.

One could also define mixed GSOS more abstractly, by partitioning the signature into two disjoint subsignatures, $\Sigma=\Sigma_{\rm GSOS}+\Sigma_{\rm coGSOS}$, and requesting two natural transformations:
\[
	\rho_{\rm GSOS}:\Sigma_{\rm GSOS}\Bid\To B\Ss \qquad\qquad \rho_{\rm coGSOS}:\Sigma_{\rm coGSOS}\Bb\To B\Sid,
\]
one responsible for the GSOS, the other one the coGSOS part of the specification. It is then clear how a mixed-GSOS specification induces a biGSOS law, by comparing $\rho_{\rm GSOS}$ and $\rho_{\rm coGSOS}$ composed with suitable injections and projections. Note that biGSOS laws allow still more flexibility than allowed by mixed-GSOS, as they allow rules that combine complex conclusion terms as in GSOS, with lookahead as in coGSOS. 

It may not be evident what it means for a biGSOS law to induce a distributive law, but it is clear how a given distributive law may extend a biGSOS law, by composing with relevant injections and projections:
\begin{definition}\label{def:extn}\rm
A distributive law $\lambda:\Ss\Bb\To\Bb\Ss$ {\em extends} a biGSOS law $\rho:\S\Bb\To B\Ss$ if the following diagram commutes:
\begin{equation}\label{eq:ibvneverv}
\vcenter{\xymatrix{
	\Sigma\Bb\ar@{=>}[r]^{\rho}\ar@{=>}[d]_{\iota\Bb} & B\Ss \\
	\Ss\Bb\ar@{=>}[r]_{\lambda} & \Bb\Ss\ar@{=>}[u]_{\pi\Ss}
}}\end{equation}
\end{definition}

In other words, $\lambda$ extends $\rho$ if it equals $\rho$ when its arguments are restricted to $\Sigma$-terms of depth $1$ and results projected to $B$-behaviours of depth $1$.

As the following examples show, not every biGSOS law extends to a distributive law, and those that do may not extend uniquely.

\begin{example}\label{ex:ivbuqwvw}\rm
For $BX=A\times X$ with a chosen element $\$\in A$, consider syntax with one constant ${\tt C}$ and one unary operation ${\tt q}$, so that $\S X = 1+X$ and $\Bb X=(X\times A)^{\omega}$. Consider $\rho:\S\Bb\To B\Ss$ defined by rules:
\begin{align*}
	\dfrac{}{{\tt C} \goes{\labA} {\tt q(C)}} \qquad \dfrac{x\goes{a}x'\goes{b}x''}{{\tt q}(x)\goes{b}{\tt q}(x'')}\ (\mbox{for }a,b\in A)
\end{align*}
Consider any distributive law $\lambda:\Ss\Bb\To\Bb\Ss$, and present $\lambda_0({\tt C})$ as:
\begin{equation}\label{eq:oeirnvwqv}
	\lambda_0({\tt C}) \quad = \quad {\tt C}\goes{a_0} \tau_1\goes{a_1}\tau_2\goes{a_2}\cdots \quad \in \quad \Bb\Ss0
\end{equation}
with each $\tau_i\in\Ss0$ and $a_i\in A$. 

If $\lambda$ extends $\rho$ then, by~\eqref{eq:ibvneverv} applied to ${\tt C}\in\Sigma\Bb0$, we have $a_0=\labA$ and $\tau_1={\tt q(C)}$.
Since $\lambda$ is a distributive law, by axioms (ii) and (iv) of Definition~\ref{def:dist-law} as explained in Example~\ref{ex:dist-laws}, from~\eqref{eq:oeirnvwqv} we get
\begin{equation}\label{eq:vinvewv}
	\lambda_0({\tt q(C)}) \quad = \quad {\tt q(C)}\goes{a_1}\tau_2\goes{a_2}\cdots
\end{equation}
Now, by~\eqref{eq:ibvneverv} applied to ${\tt q}(\lambda_0({\tt C}))\in\Sigma\Bb\Ss0$, we have:
\begin{equation}\label{eq:oesvnaefv}
	\lambda_{\Ss0}({\tt q}(\lambda_0({\tt C}))) = \lambda_{\Ss0}({\tt q}({\tt C}\goes{a_0} \tau_1\goes{a_1}\tau_2\goes{a_2}\cdots)) = {\tt q(C)}\goes{a_1} {\tt q}(\tau_2)\goes{}\cdots
\end{equation} 
(only the first step of the stream on the right is determined this way). By axiom (iii) of Definition~\ref{def:dist-law} as explained in Example~\ref{ex:dist-laws}, the stream~\eqref{eq:oesvnaefv} is equal to~\eqref{eq:vinvewv} (or, more precisely, it is mapped to it by pointwise application of $\mu_0$); as a result, ${\tau_2}={\tt q}(\tau_2)$. However, there is no such term $\tau_2$ and, as a consequence, a distributive law $\lambda$ that extends $\rho$ does not exist.
\end{example}

\begin{example}\label{ex:onvawrar}\rm
Consider the previous example with the rightmost rule slightly modified to:
\begin{align*}
	\dfrac{x\goes{a}x'\goes{b}x''}{{\tt q}(x)\goes{b}x''}\ (\mbox{for }a,b\in A)
\end{align*}
If, say, $A=\{\labA,\labB\}$, then the corresponding $\rho$ can be extended e.g. to distributive laws $\lambda,\lambda'$ such that:
\begin{align*}
\lambda_0({\tt C}) &= {\tt C}\goes{\labA}{\tt q(C)}\goes{\labA}{\tt q(C)}\goes{\labA}{\tt q(C)}\goes{\labA}\cdots \\
\lambda'_0({\tt C}) &= {\tt C}\goes{\labA}{\tt q(C)}\goes{\labB}{\tt q(C)}\goes{\labB}{\tt q(C)}\goes{\labB}\cdots \\
\end{align*}
\end{example}

This example shows that distinct distributive laws $\lambda,\lambda':\Ss\Bb\To\Bb\Ss$ can sometimes be equalized by composing with both $\iota\Bb:\S\Bb\To\Ss\Bb$ and $\pi\Ss:\Bb\Ss\To B\Ss$ (see Definition~\ref{def:extn}). However, distinct distributive laws cannot be equalized by composing with only one of these transformations:

\begin{lemma}\label{lem:wonvaev}\rm
For any distributive laws $\lambda,\lambda':\Ss\Bb\To\Bb\Ss$:
\begin{center}
 (a)\ \ if $\lambda\circ\iota\Bb=\lambda'\circ\iota\Bb$ then $\lambda=\lambda'$, \qquad and \qquad (b)\ \  if $\pi\Ss\circ\lambda=\pi\Ss\circ\lambda'$ then $\lambda=\lambda'$.
\end{center}
\end{lemma}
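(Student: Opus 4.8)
The plan is to handle (a) and (b) by the same device applied on the algebra side and on the coalgebra side respectively, so that in each case $\lambda$ and $\lambda'$ appear as two solutions of one and the same universal problem and must therefore coincide. For (a) the relevant universal property is that $\Ss\Bb X$ is the initial algebra for $\S(-)+\Bb X$, with structure $[\psi_{\Bb X},\eta_{\Bb X}]$; for (b) it is that $\Bb\Ss X$ is the final coalgebra for $B(-)\times\Ss X$, with structure $\langle\theta_{\Ss X},\epsilon_{\Ss X}\rangle$. I expect each half to use exactly two of the four axioms of Definition~\ref{def:dist-law}: (i) and (iii) for (a), and (ii) and (iv) for (b), together with $\psi=\mu\circ\iota\Ss$ (resp. $\theta=\pi\Bb\circ\delta$) and the naturality of $\iota$ (resp. of $\pi$).

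For (a), fix $X$ and show that $\lambda_X:\Ss\Bb X\to\Bb\Ss X$ is an algebra morphism from the initial algebra for $\S(-)+\Bb X$ to an algebra structure $[h,\Bb\eta_X]$ on $\Bb\Ss X$ whose definition does not mention $\lambda$. On the $\Bb X$-component this is exactly axiom (i): $\lambda_X\circ\eta_{\Bb X}=\Bb\eta_X$. On the $\S$-component, write $\psi_{\Bb X}=\mu_{\Bb X}\circ\iota_{\Ss\Bb X}$, apply axiom (iii), and then use naturality of $\iota$ at $\lambda_X$ to move $\iota$ past $\Ss\lambda_X$; this rewrites $\lambda_X\circ\psi_{\Bb X}$ as $\Bb\mu_X\circ(\lambda_{\Ss X}\circ\iota_{\Bb\Ss X})\circ\S\lambda_X$. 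The bracketed factor is precisely the $\Ss X$-component of $\lambda\circ\iota\Bb$, which the hypothesis equates with that of $\lambda'\circ\iota\Bb$; hence $\lambda_X\circ\psi_{\Bb X}=h\circ\S\lambda_X$ for $h:=\Bb\mu_X\circ\lambda'_{\Ss X}\circ\iota_{\Bb\Ss X}$. Repeating the identical computation for $\lambda'$ yields $\lambda'_X\circ\psi_{\Bb X}=h\circ\S\lambda'_X$ with the very same $h$ (for $\lambda'$ the factor $\lambda'_{\Ss X}\circ\iota_{\Bb\Ss X}$ occurs directly, so no hypothesis is used there). Thus $\lambda_X$ and $\lambda'_X$ are both morphisms from the initial algebra into $(\Bb\Ss X,[h,\Bb\eta_X])$, so they are equal; as $X$ was arbitrary, $\lambda=\lambda'$.

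Part (b) is the formal dual. Fix $X$; axiom (ii) gives the $\Ss X$-component, $\epsilon_{\Ss X}\circ\lambda_X=\Ss\epsilon_X$, which does not mention $\lambda$. For the $B$-component, write $\theta_{\Ss X}=\pi_{\Bb\Ss X}\circ\delta_{\Ss X}$, apply axiom (iv), and use naturality of $\pi$ at $\lambda_X$ to move $\pi$ past $\Bb\lambda_X$; this rewrites $\theta_{\Ss X}\circ\lambda_X$ as $B\lambda_X\circ(\pi_{\Ss\Bb X}\circ\lambda_{\Bb X})\circ\Ss\delta_X$, whose bracketed factor is the $\Bb X$-component of $\pi\Ss\circ\lambda$ and hence, by hypothesis, equals that of $\pi\Ss\circ\lambda'$. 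So $\theta_{\Ss X}\circ\lambda_X=B\lambda_X\circ g$ for $g:=\pi_{\Ss\Bb X}\circ\lambda'_{\Bb X}\circ\Ss\delta_X$, and the same computation gives $\theta_{\Ss X}\circ\lambda'_X=B\lambda'_X\circ g$ with the same $g$. Hence $\lambda_X$ and $\lambda'_X$ are both coalgebra morphisms from $(\Ss\Bb X,\langle g,\Ss\epsilon_X\rangle)$ into the final coalgebra $\Bb\Ss X$, so they coincide.

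The diagram chases are routine once set up; the one delicate point --- and the place a careless argument would break --- is the bookkeeping of whiskering, i.e.\ checking that ``$\lambda$ (or $\lambda'$) applied one level higher'' is literally the component at $\Ss X$ of $\lambda\circ\iota\Bb$ in case (a), and the component at $\Bb X$ of $\pi\Ss\circ\lambda$ in case (b), since that is exactly the expression controlled by the hypothesis. A worthwhile remark to include is that each direction used only the monad-side (resp.\ comonad-side) axioms of the distributive law, which is precisely what makes the duality between (a) and (b) exact.
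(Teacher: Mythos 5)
Your proof is correct and coincides with the paper's own argument: the paper likewise derives the whiskered forms of axioms (iii) and (iv) (its diagrams (iii$'$) and (iv$'$)) via $\psi=\mu\circ\iota\Ss$, $\theta=\pi\Bb\circ\delta$ and naturality of $\iota$ and $\pi$, and then concludes by initiality of the free $\S$-algebra over $\Bb X$ for (a) and the dual finality argument for (b). The only difference is presentational --- you spell out the target (co)algebra structures $[h,\Bb\eta_X]$ and $\langle g,\Ss\epsilon_X\rangle$ explicitly, which the paper leaves implicit.
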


It makes sense to say that a biGSOS law $\rho$ induces a distributive law if there is a unique distributive law that extends $\rho$. This is consistent with known results about GSOS and coGSOS laws, which, as has been understood since~\cite{turiplotkin}, induce distributive laws:

\begin{theorem}\label{thm:GSOSextends}\rm
For every GSOS law $\rho:\S\Bid\To B\Ss$, and for every coGSOS law $\rho:\S\Bb\To B\Sid$  there is a unique distributive law $\lambda:\Ss\Bb\To\Bb\Ss$ that extends the associated biGSOS law.
\end{theorem}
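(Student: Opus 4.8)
The plan is to treat the two cases separately, since each is essentially a known construction from \cite{turiplotkin} repackaged in the language of Definition~\ref{def:extn}. In each case the strategy has two parts: (1) construct a distributive law $\lambda$ out of the given law and verify it extends the associated biGSOS law, and (2) prove uniqueness. For part (2), Lemma~\ref{lem:wonvaev} does most of the work, so I would prove that lemma first (or assume it, as it is stated just above). The point is that for a GSOS law the extension diagram~\eqref{eq:ibvneverv} pins down $\pi\Ss\circ\lambda$ completely — given $\rho'$, the composite $\pi\Ss\circ\lambda\circ\iota\Bb = \rho'\circ\S\langle\epsilon,\pi\rangle$ is forced, and then an inductive argument on the structure of $\Ss$-terms (using axioms (i)--(iii), i.e.\ that $\lambda$ is compatible with $\eta,\mu$) shows $\pi\Ss\circ\lambda$ is determined on all of $\Ss\Bb$; Lemma~\ref{lem:wonvaev}(b) then gives uniqueness of $\lambda$ itself. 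Dually, for a coGSOS law the diagram determines $\lambda\circ\iota\Bb$, and a coinductive/decompositional argument using axiom (iv) (compatibility with $\delta$) extends this to all of $\Ss\Bb$, after which Lemma~\ref{lem:wonvaev}(a) finishes.

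For existence in the GSOS case, I would recall the standard construction: a GSOS law $\rho':\S\Bid\To B\Ss$ is equivalent to a natural transformation making $\Ss$ lift to the category of $B$-coalgebras in a suitable sense; concretely one builds $\lambda:\Ss\Bb\To\Bb\Ss$ by finality of $\Bb\Ss X$ as a $B(-)\times\Ss X$-coalgebra, defining the coalgebra structure on $\Ss\Bb X$ from $\rho'$, the monad multiplication, and the comonad counit. One then checks the four axioms of Definition~\ref{def:dist-law} and commutativity of~\eqref{eq:ibvneverv}; the latter is immediate from how the first ``layer'' of the coinductively-defined $\lambda$ is set up. The coGSOS case is formally dual: a coGSOS law $\rho'':\S\Bb\To B\Sid$ lets $\Bb$ lift to $\S$-algebras, and one defines $\lambda$ by initiality of $\Ss\Bb X$ as an $\S(-)+\Bb X$-algebra. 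Rather than redo~\cite{turiplotkin}'s verifications in full, I would cite them and only indicate how the construction yields precisely a law extending the biGSOS law in the sense of Definition~\ref{def:extn} — the mild novelty here is the reformulation through~\eqref{eq:ibvneverv}, not the construction.

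The main obstacle is uniqueness, specifically making rigorous the inductive/coinductive arguments that~\eqref{eq:ibvneverv} together with the distributive-law axioms determines $\pi\Ss\circ\lambda$ (resp.\ $\lambda\circ\iota\Bb$) everywhere. In the GSOS case, $\pi\Ss\circ\lambda$ composed with $\psi\Bb:\S\Ss\Bb\To\Ss\Bb$ should be expressible via $\rho'$ and already-known data using axiom (iii), and composed with $\eta\Bb$ it is fixed by axioms (i)--(ii); since $\psi$ and $\eta$ are jointly ``epic enough'' (they present $\Ss$ as a free monad, so every element of $\Ss\Bb X$ is reached), this determines the whole composite — but one has to phrase this as a genuine induction, e.g.\ on term depth, rather than wave at it. The coGSOS case dualizes this to a statement that $\lambda\circ\iota\Bb$ is determined by its ``first layer'' (fixed by~\eqref{eq:ibvneverv}) plus the decompositional axiom (iv), which likewise needs care because $\Bb$ is a greatest fixpoint. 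I would structure both arguments as: the relevant composite, precomposed with the canonical presentation maps of the (co)free (co)monad, is forced; conclude by the universal property. Given Lemma~\ref{lem:wonvaev}, no further verification of the distributive-law axioms for a candidate is needed in the uniqueness direction — that is the payoff of proving the lemma first.
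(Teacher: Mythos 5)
Your proposal is correct and follows essentially the same route as the paper: existence by recalling the staged induction/coinduction construction of \cite{turiplotkin} and checking that each stage is compatible with the extension square of Definition~\ref{def:extn}, and uniqueness by showing that the composite $\lambda\circ\iota\Bb$ (resp.\ $\pi\Ss\circ\lambda$) is forced via the universal property of the cofree comonad (resp.\ free monad) together with axioms (ii)/(iv) (resp.\ (i)--(iii)), after which Lemma~\ref{lem:wonvaev} concludes. The paper likewise carries out only the coGSOS case in detail (exhibiting $\lambda\circ[\eta,\iota]\Bb$ as a coalgebra morphism out of a $\lambda$-independent structure) and dualizes the rest, just as you suggest.
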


\begin{proof}[Proof sketch]
For the existence of $\lambda$, constructions of distributive laws from GSOS and coGSOS laws were given already in~\cite{turiplotkin}, and later explained more elegantly in~\cite{lenisapowerwatanabe2}. It is not difficult to prove that those constructions extend the respective GSOS and coGSOS laws in the sense of Definition~\ref{def:extn}.

For the uniqueness of $\lambda$, Lemma~\ref{lem:wonvaev} is used.
\end{proof}

\section{Queue machines}\label{sec:queue}

We shall prove that it is undecidable whether a given biGSOS law uniquely extends to a distributive law. To this end, we use the undecidability of the halting problem of queue machines.

A queue machine (QM) is a deterministic finite automaton additionally equipped with a first-in-first-out queue to store letters. A machine can read letters off the queue, and depending on their contents change their state while adding new letters to the queue.
Under the classical definition~\cite{kozen}, a QM in each transition (a) removes exactly one letter from the queue and (b) adds some (possibly zero) letters to it. 
For our purposes, it will be convenient to consider instead a variant of QMs that, in each step:
(a) remove zero, one or two letters from the queue, and
(b) add exactly one letter to it. 
Formally:
\begin{definition}\label{MQM_def}\rm
A {\em queue machine} (QM) ${\cal M}=(Q,A, \$, q_1, \delta_0, \delta_1, \delta_2)$ consists of a finite set $Q$ of states, a finite alphabet $A$ with a chosen symbol $\$\in A$, a starting state $q_1\in Q$, and three partial transition functions:
\[
	\delta_0: Q \pto Q\times A \qquad \delta_1: Q \times A \pto Q\times A \qquad \delta_2: Q \times A \times A \pto Q\times A
\]
that are disjointly defined and jointly total, i.e., such that for each $q\in Q$ and $a,b\in A$, exactly one of $\delta_0(q)$, $\delta_1(q,a)$ or $\delta_2(q,a,b)$ is defined.
A {\em configuration} of ${\cal M}$ is a pair $(q,w)\in Q\times A^*$; the machine induces a transition function $\mqmgoes{}$ on the set of configurations by:
\begin{align*}
	(q,w)  \mqmgoes{} (q',wc) \qquad &\mbox{if} \qquad \delta_0(q)=(q',c) \\
	(q,aw)  \mqmgoes{} (q',wc) \qquad &\mbox{if} \qquad \delta_0(q) \mbox{ undefined and }\delta_1(q,a)=(q',c) \\
	(q,abw) \mqmgoes{} (q',wc) \qquad &\mbox{if} \qquad \delta_0(q) \mbox{ and } \delta_1(q,a) \mbox{ undefined and }\delta_2(q,a,b)=(q',c).
\end{align*}
\end{definition}
Note that an MQM never makes a queue empty, and it terminates if and only if it reaches a configuration $(q,a)$ with a single letter $a$ in the queue, such that $\delta_0(q)$ and $\delta_1(q,a)$ are undefined.

\begin{theorem}\label{thm:qmundecid}\rm
It is undecidable whether a given QM terminates from the configuration $(q_1,\$)$, called the {\em initial configuration}.
\end{theorem}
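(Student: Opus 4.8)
The plan is to prove undecidability by reduction from the halting problem of queue machines under the classical definition of~\cite{kozen}: a deterministic finite control with an auxiliary FIFO queue that, in each step, reads exactly one letter off the queue and appends a word to it. Such machines are equivalent in power to Turing machines, so their halting problem is undecidable. The first thing I would note is that one may restrict to a convenient subclass: the standard simulation of a Turing machine by a classical queue machine never appends the empty word --- each of its steps appends a word of length one or two --- and it halts only by reaching a configuration in which no transition applies. Hence it suffices to reduce from a classical queue machine ${\cal M}_0$ all of whose steps append a \emph{non-empty} word, with a single-letter start configuration $(p_0,\$)$, which one can always arrange.

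Given such an ${\cal M}_0$, I would construct a QM ${\cal M}$ as in Definition~\ref{MQM_def} that simulates it step by step. A step of ${\cal M}_0$ that reads $a$ and appends $v = v_1\cdots v_k$ with $k\ge 1$ becomes $k$ micro-steps of ${\cal M}$: a $\delta_1$-transition removing $a$ and appending $v_1$, followed by $k-1$ consecutive $\delta_0$-transitions appending $v_2,\dots,v_k$. This costs $k-1$ fresh auxiliary control states per transition of ${\cal M}_0$, hence finitely many in total. Restricting to non-empty appends is what makes this work: it is exactly what avoids ever needing to remove a letter from the queue without replacing it, the one thing ${\cal M}$'s ``append exactly one letter per step'' discipline cannot do directly, and the main obstacle to design around. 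The control states of ${\cal M}$ are those of ${\cal M}_0$, the auxiliary states, and one draining state $q_{\mathrm{drain}}$; the alphabet is that of ${\cal M}_0$ (which already contains $\$$); and $q_1 := p_0$. One checks routinely that for each control state and each pair of head letters exactly one of $\delta_0,\delta_1,\delta_2$ receives a value --- states of ${\cal M}_0$ use only $\delta_1$, auxiliary states only $\delta_0$, and $q_{\mathrm{drain}}$ only $\delta_2$ --- so ${\cal M}$ is disjointly defined and jointly total.

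It remains to line up the termination conditions. As long as ${\cal M}_0$ has not halted, its queue --- which coincides with ${\cal M}$'s queue whenever ${\cal M}$ sits in a state of ${\cal M}_0$ --- has length at least one, since it starts at length one and, appends being non-empty, never shrinks; hence every $\delta_1$-transition of ${\cal M}$ is enabled and ${\cal M}$ does not terminate. When ${\cal M}_0$ reaches a configuration with no applicable transition, ${\cal M}$ instead enters $q_{\mathrm{drain}}$, where $\delta_2(q_{\mathrm{drain}},a,b)=(q_{\mathrm{drain}},a)$ for all $a,b$ while $\delta_0,\delta_1$ are undefined; this shortens the queue by one at every step until a single letter is left, whereupon ${\cal M}$ terminates in exactly the sense of Definition~\ref{MQM_def}. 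Therefore ${\cal M}$ terminates from $(q_1,\$)$ if and only if ${\cal M}_0$ terminates from $(p_0,\$)$, and since ${\cal M}$ is computable from ${\cal M}_0$, Theorem~\ref{thm:qmundecid} follows. The one step that needs genuine care is the reduction's opening move --- checking that the Turing-machine simulation can be taken with non-empty appends --- after which everything is straightforward bookkeeping.
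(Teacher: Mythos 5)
Your reduction is sound, but it bridges the gap between the two machine models in a genuinely different way from the paper. The paper keeps the classical Kozen model intact --- total transition function, empty appends allowed, termination defined as reaching the empty queue --- and absorbs the mismatch entirely in the target machine: it enlarges the alphabet with a dummy symbol $\#$ that is enqueued whenever the classical machine would append nothing, and uses $\delta_2$ to read past a $\#$ at the head of the queue, with $\delta_2(q,\#,\#)=(q,\#)$ acting as the drain once the queue holds only $\#$'s. You instead push the mismatch back into the source: you restrict to classical machines with non-empty appends that halt by getting stuck, which buys you a cleaner target construction (letter-by-letter simulation plus one explicit $q_{\mathrm{drain}}$ state) at the cost of no longer literally reducing from ``the halting problem under the classical definition'': under that definition $\delta$ is total and halting means an empty queue, so a machine that never appends the empty word never halts at all. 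What you are really reducing from is the halting problem of a stuck-halting variant with appends of length one or two, and you rightly flag that verifying the Turing-machine simulation in that model is the step needing care --- it is true, but it is your burden to discharge rather than something inherited from the citation. Two bookkeeping slips, neither fatal: the summary ``states of ${\cal M}_0$ use only $\delta_1$'' contradicts joint totality, since for a stuck pair $(p,a)$ both $\delta_0(p)$ and $\delta_1(p,a)$ are undefined and so $\delta_2(p,a,b)$ must be defined for all $b$ (this is precisely how ${\cal M}$ enters $q_{\mathrm{drain}}$); and if ${\cal M}_0$ gets stuck with a single-letter queue, $\delta_2$ cannot fire and ${\cal M}$ terminates on the spot, which is the desired outcome but is a separate case from the draining one.
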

\begin{proof}
As is well known, it is undecidable whether a classical QM ${\cal M}$ as defined in~\cite{kozen} terminates on its initial configuration.
For every classical ${\cal M}$ one constructs a QM $\overline{\cal M}$ as in Definition~\ref{MQM_def} that terminates on its initial configuration if and only if ${\cal M}$ does.
\end{proof}

\section{From queue machines to stream specifications}\label{sec:qm2ss}\rm

Given a QM ${\cal M}=(Q,A, \$, q_1, \delta_0, \delta_1, \delta_2)$, consider a signature with a single constant ${\tt C}$ and a family of unary operation symbols $\{{\tt q}\mid q\in Q\}$, and a family of rules:
\begin{equation}\label{eq:ssrules}
\dfrac{}{{\tt C}\goes{\$}{\tt q_1(C)}}\ (\textbf{C})
\qquad
\dfrac{}{{\tt q}(x)\goes{c}{\tt q'}(x)}\ (\textbf{R0})
\qquad
\dfrac{x\goes{a}y}{{\tt q}(x)\goes{c}{\tt q'}(y)}\ (\textbf{R1})
\qquad
\dfrac{x\goes{a}y\goes{b}z}{{\tt q}(x)\goes{c}{\tt q'}(z)}\ (\textbf{R2})
\end{equation}
for all $q,q'\in Q$ and $a,b,c\in A$ subject to the following conditions:
\begin{itemize}
\item {\bf R0} is included when $\delta_0(q)=(q',c)$,
\item {\bf R1} is included when $\delta_0(q)$ is undefined and $\delta_1(q,a)=(q',c)$, and
\item {\bf R2} is included when $\delta_0(q)$ and $\delta_1(q,a)$ are undefined and $\delta_2(q,a,b)=(q',c)$.
\end{itemize}
These rules are mixed GSOS, so they define a biGSOS law $\rho_{\cal M}:\S\Bb\To B\Ss$, where $BX=A\times X$ and $\S X = 1+Q\times X$.
%
%
We shall now prove, in a sequence of lemmas, that $\rho_{\cal M}$ uniquely extends to a distributive law if and only if ${\cal M}$ does {\em not} terminate from the initial configuration. Our argument relies on the following correspondence between partial runs of ${\cal M}$ and prefixes of streams produced by distributive laws that extend $\rho_{\cal M}$:

\begin{lemma}\label{lem:owefnw}\rm
For every $n>0$, if a QM ${\cal M}$ makes $n-1$ steps from the initial configuration:
\[
	q_1,w_1 \mqmgoes{} q_2,w_2 \mqmgoes{} q_3,w_3 \mqmgoes{}\cdots\mqmgoes{}q_n,w_n
\]
(where $w_1=\$$) then every distributive law $\lambda$ that extends $\rho_{\cal M}$ maps the constant symbol ${\tt C}\in \Ss\Bb0$ to a stream $\lambda_0({\tt C})\in\Bb\Ss0$ that begins with
\[
	\tau_0\goes{\$}\tau_1\goes{a_1}\tau_2\goes{a_3}\tau_3\goes{a_3}\cdots\goes{a_{n-1}}\tau_n,
\]
where 
\begin{itemize}
\item each $a_i\in A$ is the last letter of $w_{i+1}$, i.e., the letter added to the queue in the $i$-th step of ${\cal M}$,
\item $\tau_0={\tt C}$, and $\tau_1,\ldots,\tau_n\in\Ss0$ are such that each $\tau_i={\tt q}_i(\tau_j)$, where $0\leq j<i$ is such that $i-j=|w_i|$.
\end{itemize}
(Note that from these properties it follows that $a_ja_{j+1}\cdots{a_{i-1}}=w_i$.)
\end{lemma}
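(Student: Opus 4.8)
The plan is to argue by induction on $n$, combining the extension condition~\eqref{eq:ibvneverv} with the concrete renderings of the distributive-law axioms~(ii)--(iv) from Example~\ref{ex:dist-laws}. The one useful preliminary remark is that $\Bb0$ is empty, so $\Ss\Bb0=\Ss0$ is just the set of closed terms and, for any $\tau\in\Ss0$, the term $\overline{\tau}$ occurring in axiom~(iv) coincides with $\tau$; hence axiom~(iv) applied to $\lambda_0({\tt C})$ says exactly that for every $i$ the value $\lambda_0(\tau_i)$ is the $i$-th tail $\tau_i\goes{a_i}\tau_{i+1}\goes{a_{i+1}}\cdots$ of $\lambda_0({\tt C})$. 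This is what lets us ``re-enter'' $\lambda$ at deeper positions.

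For the base case $n=1$ (no steps, $w_1=\$$) I would note that axiom~(ii) gives $\tau_0={\tt C}$, and that instantiating~\eqref{eq:ibvneverv} at the object $0$ on the constant ${\tt C}\in\S\Bb0$, for which rule~$(\textbf{C})$ yields $\rho_{\cal M}({\tt C})=(\$,{\tt q_1(C)})$, gives $a_0=\$$ and $\tau_1={\tt q_1(C)}$; since $1-0=|w_1|$ this is the asserted value of $\tau_1$.

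For the inductive step I would assume the claim for the first $n-1$ steps and suppose ${\cal M}$ additionally performs $q_n,w_n\mqmgoes{}q_{n+1},w_{n+1}$. By the hypothesis $\lambda_0({\tt C})$ begins with $\tau_0\goes{a_0}\cdots\goes{a_{n-1}}\tau_n$ with $\tau_n={\tt q}_n(\tau_m)$, $m:=n-|w_n|$, and (as the statement notes) $a_m a_{m+1}\cdots a_{n-1}=w_n$. Writing $\sigma_m:=\lambda_0(\tau_m)$, which by the preliminary remark is the $m$-th tail of $\lambda_0({\tt C})$, its first $|w_n|$ labels spell $w_n$. Axiom~(iii), in the special case of composing the single symbol ${\tt q}_n$ on top of $\tau_m$, says that $\lambda_0(\tau_n)=\lambda_0({\tt q}_n(\tau_m))$ arises from $\lambda_{\Ss0}({\tt q}_n(\sigma_m))$ by flattening colors via $\mu_0$; and $\lambda_0(\tau_n)$ is, again by the remark, the $n$-th tail of $\lambda_0({\tt C})$. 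Hence $a_n$ is the first label of $\lambda_{\Ss0}({\tt q}_n(\sigma_m))$ and $\tau_{n+1}$ is $\mu_0$ of its second color --- and both are delivered by~\eqref{eq:ibvneverv} at $\Ss0$ applied to ${\tt q}_n(\sigma_m)\in\S\Bb\Ss0$, since $\pi\Ss\circ\lambda$ reads off precisely (first label, second color) and there equals $\rho_{\cal M}$. It then remains to evaluate $\rho_{\cal M}({\tt q}_n(\sigma_m))$ by the three-way split on which of $\delta_0(q_n)$, $\delta_1(q_n,a_m)$, $\delta_2(q_n,a_m,a_{m+1})$ is defined --- i.e.\ which of rules $(\textbf{R0})$, $(\textbf{R1})$, $(\textbf{R2})$ applies to ${\tt q}_n$ on a premise stream whose leading labels spell $w_n$, which is by construction the rule matching the $n$-th step of ${\cal M}$ --- obtaining in all cases that the output label is the letter $c$ added in that step (the last letter of $w_{n+1}$) and the output term is ${\tt q}_{n+1}(\tau_{m+k})$ with $k=0,1,2$ respectively. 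Since $|w_{n+1}|$ is then $|w_n|+1$, $|w_n|$, $|w_n|-1$ respectively, in each case $m+k=n+1-|w_{n+1}|$, whence $a_n$ is the last letter of $w_{n+1}$ and $\tau_{n+1}={\tt q}_{n+1}(\tau_j)$ with $j=n+1-|w_{n+1}|$; the bounds $0\le j<n+1$ follow from $1\le|w_{n+1}|\le n+1$.

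No step is conceptually deep; the work is the bookkeeping, and that is where the main obstacles lie. The points that need care are: making precise the single-symbol instance of axiom~(iii) used above, namely that $\lambda_0({\tt q}_n(\tau_m))$ is the $\mu_0$-flattening of $\lambda_{\Ss0}({\tt q}_n(\lambda_0(\tau_m)))$; tracking the arithmetic relating queue lengths $|w_i|$ to positions in the stream $\lambda_0({\tt C})$; verifying that the colors $\tau_{m+1},\tau_{m+2}$ used in the $(\textbf{R1})$ and $(\textbf{R2})$ cases already belong to the prefix $\tau_0,\ldots,\tau_n$ fixed by the hypothesis (which holds since the queue is never empty, so $m\le n-1$, and $(\textbf{R2})$ fires only when $|w_n|\ge2$, so $m+2\le n$); and noting that, thanks to the disjointness of $\delta_0,\delta_1,\delta_2$, exactly one rule fires on ${\tt q}_n(\sigma_m)$, so $\rho_{\cal M}$ is well defined there.
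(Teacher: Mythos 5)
Your proof is correct and follows essentially the same route as the paper's: induction on $n$ with a base case driven by rule $(\textbf{C})$ and an inductive step splitting on which of $\delta_0,\delta_1,\delta_2$ fires, so that the corresponding rule $(\textbf{R0})$/$(\textbf{R1})$/$(\textbf{R2})$ forces the next transition. The only difference is that you spell out explicitly the mechanism the paper leaves implicit (and illustrates only in Example~\ref{ex:ivbuqwvw}) --- namely how axioms (iii)--(iv) together with diagram~\eqref{eq:ibvneverv} instantiated at $\Ss0$ pin down the next label and term --- and your bookkeeping of the index $j=i-|w_i|$ in all three cases checks out.
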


\begin{proof}
We proceed by induction on $n$. For the base case $n=1$, if $\lambda$ extends $\rho_{\cal M}$ then, thanks to rule {\bf C}, the stream $\lambda_0({\tt C})$ must begin with:
\[
\lambda_0({\tt C})={\tt C}\goes{\$}{\tt q}_1({\tt C})
\]
which satisfies the inductive statement.

For the inductive step, assume that ${\cal M}$ makes $n$ steps:
\[
	q_1,w_1 \mqmgoes{} q_2,w_2 \mqmgoes{} q_3,w_3 \mqmgoes{}\cdots\mqmgoes{}q_n,w_n\mqmgoes{}q_{n+1},w_{n+1}
\]
By the inductive assumption, for any $\lambda$ that extends $\rho_{\cal M}$, the stream $\lambda_0({\tt C})$ must begin with:
\begin{equation}\label{eq:taus}
	\tau_0\goes{\$}\tau_1\goes{a_1}\tau_2\goes{a_3}\tau_3\goes{a_3}\cdots\goes{a_{n-1}}\tau_n,
\end{equation}
where $\tau_n={\tt q}_n(\tau_j)$ such that $n-j=|w_n|$, and $w_n=a_ja_{j+1}a_{j+2}\cdots{a_{n-1}}$.

There are three cases to consider, depending on how the configuration $(q_{n+1},w_{n+1})$ is derived from $(q_n,w_n)$:
\begin{itemize}
\item $\delta_0(q_n)=(q_{n+1},a_n)$ and $w_{n+1}=w_na_n$, for some $a_n\in A$. Then $\rho_{\cal M}$ includes a corresponding rule {\bf R0}, and if $\lambda$ extends $\rho_{\cal M}$ then the initial part~\eqref{eq:taus} in $\lambda_0({\tt C})$ is necessarily extended with $\tau_n\goes{a_n}\tau_{n+1}={\tt q}_{n+1}(\tau_{j})$, and the inductive statement is preserved.
\item $\delta_0(q_n)$ is undefined, and $\delta_1(q_n,a_j)=(q_{n+1},a_n)$ and $w_{n+1}=a_{j+1}a_{j+2}\cdots a_{n-1}a_n$, for some $a_n\in A$. Then $\rho_{\cal M}$ includes a corresponding rule {\bf R1}, and if $\lambda$ extends $\rho_{\cal M}$ then the initial part~\eqref{eq:taus} in $\lambda_0({\tt C})$ is necessarily extended with $\tau_n\goes{a_n}\tau_{n+1}={\tt q}_{n+1}(\tau_{j+1})$, and the inductive statement is preserved.
\item $\delta_0(q_n)$ and $\delta_1(q_n,a_j)$ are undefined, and $\delta_2(q_n,a_j,a_{j+1})=(q_{n+1},a_n)$ and $w_{n+1}=a_{j+2}\cdots a_{n-1}a_n$, for some $a_n\in A$. (Note that, since {\cal M} does not terminate in $(q_n,w_n)$, we know that $n-j\geq 2$.) Then $\rho_{\cal M}$ includes a corresponding rule {\bf R2}, and if $\lambda$ extends $\rho_{\cal M}$ then the initial part~\eqref{eq:taus} in $\lambda_0({\tt C})$ is necessarily extended with $\tau_n\goes{a_n}\tau_{n+1}={\tt q}_{n+1}(\tau_{j+2})$, and the inductive statement is preserved.
\end{itemize}
\end{proof}

\begin{lemma}\label{lem:ibvwavv}\rm
For any QM ${\cal M}$ that does not terminate from the initial configuration, the transformation $\rho_{\cal M}$ is extended by at most one distributive law.
\end{lemma}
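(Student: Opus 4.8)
The plan is to show that a distributive law $\lambda$ extending $\rho_{\cal M}$ is completely determined, hence unique, by using the non-termination hypothesis to pin down $\lambda_0({\tt C})$ entirely, and then using the naturality and axioms of $\lambda$ to propagate this to all components $\lambda_X$. First I would observe that since ${\cal M}$ never terminates from $(q_1,\$)$, the run $q_1,w_1 \mqmgoes{} q_2,w_2 \mqmgoes{}\cdots$ is infinite, so Lemma~\ref{lem:owefnw} applies for every $n>0$. The stream prefixes it produces are nested and cofinal, so together they determine a single infinite stream: $\lambda_0({\tt C}) = \tau_0\goes{\$}\tau_1\goes{a_1}\tau_2\goes{a_2}\cdots$ with the $a_i$ and $\tau_i$ exactly as described in the lemma. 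Crucially, this value depends only on ${\cal M}$, not on $\lambda$; so any two distributive laws $\lambda,\lambda'$ extending $\rho_{\cal M}$ agree on ${\tt C}\in\Ss\Bb 0$.

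The second step is to upgrade "$\lambda_0$ and $\lambda'_0$ agree on ${\tt C}$" to "$\lambda = \lambda'$". Here I would invoke Lemma~\ref{lem:wonvaev}(a): it suffices to show $\lambda\circ\iota\Bb = \lambda'\circ\iota\Bb$, i.e., that $\lambda_X$ and $\lambda'_X$ agree on arguments of the form $\iota_{\Bb X}(t)$ for $t\in\Sigma\Bb X$, for every set $X$. Since $\Sigma X = 1 + Q\times X$, such an argument is either (the image of) the constant ${\tt C}$ applied at a one-element stream-tree, or ${\tt q}(\sigma)$ for some $q\in Q$ and $\sigma\in\Bb X$. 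For the constant case, ${\tt C}$ carries no element of $X$, so by naturality (pulling back along the unique map $0\to X$) $\lambda_X$ on the constant is determined by $\lambda_0({\tt C})$, which we have just shown is independent of $\lambda$. For the ${\tt q}(\sigma)$ case, the defining rules {\bf R0}, {\bf R1}, {\bf R2} are mixed-GSOS and inspect at most the first two labels of $\sigma$; since $\lambda$ extends $\rho_{\cal M}$, the diagram~\eqref{eq:ibvneverv} forces the first label and first term of $\lambda_X(\iota({\tt q}(\sigma)))$ to coincide with $\rho_{\cal M}$ applied to $\sigma$ truncated appropriately — and this is the \emph{same} recipe for $\lambda$ and $\lambda'$. (One has to check that $\rho_{\cal M}$, being a genuine natural transformation already fixed by the rules, gives a well-defined single value; this is routine from the rule shapes, noting exactly one of {\bf R0}/{\bf R1}/{\bf R2} fires by disjointness and joint totality.)

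I expect the main obstacle to be the ${\tt q}(\sigma)$ case: Lemma~\ref{lem:wonvaev}(a) lets us restrict attention to depth-$1$ terms, but $\lambda_X$ on $\iota({\tt q}(\sigma))$ must be determined at \emph{all} depths of the output stream, not just the first step that~\eqref{eq:ibvneverv} directly controls. The resolution is that $\rho_{\cal M}$ together with the coGSOS-style rules only ever produces a head of the form ${\tt q'}(\text{a tail of }\sigma)$ or ${\tt q'}$ applied to some $x_i$ from $\sigma$; so iterating~\eqref{eq:ibvneverv} together with axioms (ii) and (iv) of Definition~\ref{def:dist-law} (as spelled out in Example~\ref{ex:dist-laws}) lets one peel off successive steps, each step again being an application of the fixed $\rho_{\cal M}$, with no freedom left. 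This is precisely the same bookkeeping as in Lemma~\ref{lem:owefnw}, now carried out over an arbitrary carrier $X$ rather than $0$; the non-termination hypothesis is what guarantees we never hit a configuration where $\rho_{\cal M}$ fails to specify the next step, so no choice is ever introduced. Assembling these observations gives $\lambda\circ\iota\Bb = \lambda'\circ\iota\Bb$, and Lemma~\ref{lem:wonvaev}(a) concludes $\lambda = \lambda'$.
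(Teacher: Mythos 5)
Your proof is correct, but it runs the argument in the direction dual to the paper's. The paper fixes an arbitrary $t\in\Ss\Bb X$ and shows, by structural induction on $t$ (with base cases $t=\sigma\in\Bb X$ via axiom (i) and $t={\tt C}$ via Lemma~\ref{lem:owefnw}), that $\lambda_X(t)$ and $\lambda'_X(t)$ agree on their \emph{first} transition; it then concludes with Lemma~\ref{lem:wonvaev}(b), i.e., $\pi\Ss\circ\lambda=\pi\Ss\circ\lambda'$ implies $\lambda=\lambda'$. You instead restrict the \emph{inputs} to depth-one terms, show that the \emph{entire} output stream on ${\tt C}$ and on ${\tt q}(\sigma)$ is forced (for ${\tt C}$ by naturality along $0\to X$ plus Lemma~\ref{lem:owefnw}; for ${\tt q}(\sigma)$ by iterating diagram~\eqref{eq:ibvneverv} with axiom (iv), since each step's target ${\tt q'}(x_j)$ unfolds to another depth-one term ${\tt q'}(\sigma^{(j)})$ over an honest stream), and conclude with Lemma~\ref{lem:wonvaev}(a). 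What each buys: the paper's route only ever has to inspect one output step at a time, at the cost of quantifying over all terms and carrying a full-agreement induction hypothesis; your route confines attention to $\Sigma\Bb X=1+Q\times\Bb X$, where the case analysis is finite and the joint totality of $\delta_0,\delta_1,\delta_2$ makes the step-by-step determinism transparent, at the cost of an explicit unfolding of axiom (iv) along the output stream. Both hinge on the same two pillars, Lemma~\ref{lem:owefnw} and Lemma~\ref{lem:wonvaev}, just opposite halves of the latter.

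One small correction: your closing sentence attributes to the non-termination hypothesis the fact that the ${\tt q}(\sigma)$ iteration never gets stuck. That is not where non-termination is used: for a genuine $\sigma\in\Bb X$ the stream always supplies the (at most two) lookahead transitions that \textbf{R0}/\textbf{R1}/\textbf{R2} need, so exactly one rule fires at every step regardless of whether ${\cal M}$ halts. Non-termination is needed only in the ${\tt C}$ case, exactly as you say in your first paragraph, to ensure the prefixes from Lemma~\ref{lem:owefnw} cover every position of $\lambda_0({\tt C})$. This is a misstatement of emphasis, not a gap.
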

\begin{proof}
Consider distributive laws $\lambda,\lambda':\Ss\Bb\To\Bb\Ss$ that both extend $\rho_{\cal M}$. For any set $X$, we wish to prove that the component functions $\lambda_X,\lambda'_X:\Ss\Bb X\to \Bb\Ss X$ are equal. We prove this by structural induction on terms $t\in\Ss\Bb X$.

For the first base case, if $t=\sigma\in\Bb X$ then $\lambda_X(t)=\lambda'_X(t)$ follows immediately from axiom (i) of Definition~\ref{def:dist-law}.
For the second base case, if $t={\tt C}$ then the equality follows from Lemma~\ref{lem:owefnw}, since ${\cal M}$ makes arbitrarily many steps from the initial configuration.

For the inductive step, we need to prove that for all terms $t\in\Ss\Bb X$ and states $q\in Q$, if $\lambda_X(t)=\lambda'_X(t)$ then $\lambda_X({\tt q}(t))=\lambda'_X({\tt q}(t))$. Denote 
\[
	\sigma \quad =\quad  \lambda_X(t)\quad =\quad \lambda'_X(t) \quad = \quad \tau_0\goes{a_0}\tau_1\goes{a_1}\tau_2\goes{a_2}\cdots \qquad (\tau_i\in\Ss X).
\]
We begin by proving that the desired equality holds when postcomposed with $\pi_{\Ss X}:\Bb\Ss X\to B\Ss X$, i.e., that the streams $\lambda_X({\tt q}(t))$ and $\lambda'_X({\tt q}(t))$ coincide on their first transitions. This is proved by case analysis similar to that used in the proof of Lemma~\ref{lem:owefnw}. For example, if $\delta_0(q)$ is undefined and $\delta_1(q,a_0)=(q',b)$ for some $q'\in Q$ and $b\in A$, then $\rho_{\cal M}$ includes a relevant {\bf R1} rule and if $\lambda$ and $\lambda'$ both extend $\rho_{\cal M}$ then $\lambda_X({\tt q}(t))$ and $\lambda'_X({\tt q}(t))$ must both begin with ${\tt q}(\tau_0)\goes{b}{\tt q'}(\tau_1)$.

We proved that for any term $t \in \Ss \Bb X$ the streams $\lambda_X(t)$ and $\lambda'_X(t)$ coincide on the first transitions, i.e., $\pi\Ss\circ\lambda=\pi\Ss\circ\lambda'$. Hence, by Lemma \ref{lem:wonvaev}(b), $\lambda=\lambda'$.
\end{proof}

\begin{lemma}\label{lem:awonbaerb}\rm
If a QM ${\cal M}$ does not terminate from the initial configuration, then there exists a distributive law that extends $\rho_{\cal M}$.
\end{lemma}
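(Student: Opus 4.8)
The plan is to assemble $\lambda$ from two pieces. First, rules \textbf{R0}, \textbf{R1}, \textbf{R2}, read over the sub-signature of unary operations $\{{\tt q}\mid q\in Q\}$ alone, form a genuine coGSOS law: for each ${\tt q}(\sigma)$ with $\sigma\in\Bb X$, disjointness and joint totality of $\delta_0,\delta_1,\delta_2$ single out exactly one of the three rules, hence exactly one first transition and one flat target term, read off from the first one or two labels of $\sigma$. Writing $F$ for the free monad over that sub-signature ($FX=Q^*\times X$), Theorem~\ref{thm:GSOSextends} gives a unique distributive law $\bar\lambda:F\Bb\To\Bb F$ extending this coGSOS law. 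Second, since ${\cal M}$ does not terminate from the initial configuration, its run is infinite, so Lemma~\ref{lem:owefnw} read in the limit $n\to\infty$ pins down a genuine infinite stream ${\tt s}=\tau_0\goes{\$}\tau_1\goes{a_1}\tau_2\goes{a_2}\cdots\in\Bb\Ss0$, with $\tau_0={\tt C}$ and $\tau_i={\tt q}_i(\tau_{i-|w_i|})$; this is the forced value of $\lambda_0({\tt C})$, and it is here --- and only here --- that non-termination is used.

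Now I define $\lambda:\Ss\Bb\To\Bb\Ss$ using the natural isomorphism $\Ss\cong F+\mathrm{const}_{\Ss0}$ (a term over $X$ is a string of ${\tt q}$'s applied either to a variable, giving $FX$, or to ${\tt C}$, giving the constant set $\Ss0$). On the first summand --- inputs ${\tt t}(\sigma)$ with $\sigma$ a real stream --- let $\lambda$ be $\bar\lambda$ followed by the inclusion $F\hookrightarrow\Ss$. On the second summand --- closed terms ${\tt t}({\tt C})$ --- feed ${\tt s}$ into $\bar\lambda$ as the behaviour of ${\tt C}$: set $\lambda_0({\tt t}({\tt C})):=\Bb\hat\mu_0\bigl(\bar\lambda_{\Ss0}({\tt t}({\tt s}))\bigr)$, where ${\tt t}({\tt s})\in F\Bb(\Ss0)$ applies the string ${\tt t}$ to the single variable ${\tt s}$ and $\hat\mu_0:F\Ss0\to\Ss0$ is the composite of the inclusion $F\Ss0\hookrightarrow\Ss\Ss0$ with $\mu_0$ (it substitutes, i.e.\ prepends the string of operations); then extend to all $X$ along $0\to X$. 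For ${\tt t}=\epsilon$ this yields $\lambda_0({\tt C})={\tt s}$, as it must.

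It remains to check four things. Naturality of $\lambda$ is clear on each summand, from naturality of $\bar\lambda$ and of the inclusion $F\hookrightarrow\Ss$. That $\lambda$ extends $\rho_{\cal M}$ in the sense of Definition~\ref{def:extn} holds on the \textbf{R0}--\textbf{R2} operations because $\bar\lambda$ extends the coGSOS law, and on ${\tt C}$ because the first transition of ${\tt s}$ is ${\tt C}\goes{\$}{\tt q}_1({\tt C})$, which is rule \textbf{C}. As for axioms (i)--(iv) of Definition~\ref{def:dist-law}: (i) is immediate ($\bar\lambda$ is the identity on streams); (ii)--(iv) restricted to the first summand are inherited from the corresponding axioms for $\bar\lambda$, using that $F\hookrightarrow\Ss$ is a morphism of monads. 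The remaining cases of (iii) and (iv) --- those where $\mu\Bb$ and $\delta\Ss$ mix the two summands, e.g.\ the substitution of a closed term into a ${\tt q}$-context --- reduce, via naturality of $\bar\lambda$ and the monad and comonad laws, to the axioms for $\bar\lambda$ together with the single identity
\[
	{\tt s}^{(i)}\;=\;\lambda_0(\tau_i)\;=\;\Bb\hat\mu_0\bigl(\bar\lambda_{\Ss0}(\tau_i({\tt s}))\bigr)\qquad\text{for all }i\in\mathbb{N},
\]
that is, to the statement that ${\tt s}$ is coherent with the coGSOS action of the ${\tt q}$'s on itself.

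This identity is the one substantive point, and I expect it to be the main obstacle. It is proved by coinduction, comparing heads (both sides have head $\tau_i$, by axiom (ii) for $\bar\lambda$) and tails; the step for tails is exactly the case analysis on which of $\delta_0(q_i),\delta_1(q_i,\cdot),\delta_2(q_i,\cdot,\cdot)$ is defined that appears in the inductive proof of Lemma~\ref{lem:owefnw}, now verified to be self-consistent rather than merely necessary --- in effect, one runs that induction all the way to $\omega$ and reads it off the tails ${\tt s}^{(i)}$. The reason the argument does not collapse, as it does in the terminating case (cf.\ Example~\ref{ex:ivbuqwvw}, where one is forced into the impossible ${\tau_2}={\tt q}(\tau_2)$), is that an infinite run keeps every tail ${\tt s}^{(i)}$ infinite, so the lookahead demanded by rules \textbf{R1} and \textbf{R2} is always available.
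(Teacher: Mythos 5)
Your proposal is correct and follows essentially the same route as the paper: split off the coGSOS part (all rules except \textbf{C}), obtain a distributive law for it via Theorem~\ref{thm:GSOSextends}, hard-wire $\lambda_0({\tt C})$ to the stream determined by the infinite run, define $\lambda$ on terms containing ${\tt C}$ by substituting that stream and applying the coGSOS-derived law followed by $\Bb\mu$, and isolate axiom (iv) --- your coherence identity ${\tt s}^{(i)}=\Bb\hat\mu_0(\bar\lambda_{\Ss0}(\tau_i({\tt s})))$ --- as the one substantive check, discharged by the same case analysis as in Lemma~\ref{lem:owefnw}. Your write-up is in fact more explicit than the paper's about the decomposition $\Ss\cong F+\mathrm{const}_{\Ss0}$ and about exactly which instances of (iii)--(iv) need the coherence identity.
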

\begin{proof}
Fix a QM ${\cal M}$ that does not terminate from the initial configuration $(q_1,\$)$. We shall define a distributive law $\lambda$ that extends $\rho_{\cal M}$. For any set $X$, begin by defining
\[
	\lambda_X({\tt C})  \quad = \quad \tau_0 \goes{a_0} \tau_1\goes{a_1}\tau_2\goes{a_2}\cdots \quad \in \quad \Bb \Ss X
\]
with $\tau_i\in\Ss X$ and $a_i\in A$ such that:
\begin{itemize}
\item $\tau_0={\tt C}$ and $a_0=\$$, 
\item for any $i>0$, $\tau_i={\tt q}_i(\tau_j)$, where the $i$-th configuration reached by ${\cal M}$ is $(q_i,w_i)$ and $j=i-|w_i|$; moreover, $a_j$ is the first letter of $w_i$.
\end{itemize} 
To define $\lambda_X$ on other terms in $\Ss \Bb X$, note that apart from rule {\bf C}, the entire specification $\rho_{\cal M}$ is a coGSOS specification, therefore, by Theorem~\ref{thm:GSOSextends}, there exists a distributive law $\hat\lambda$ that extends all rules of $\rho_{\cal M}$ apart from {\bf C}. For any term ${\tt t}\in\Ss\Bb X$ where ${\tt C}$ does not appear, define $\lambda_X({\tt t})$ to be $\hat\lambda_X({\tt t})$. If ${\tt C}$ appears in ${\tt t}$, replace it with the stream $\lambda_X({\tt C})$ and use $\hat\lambda_{\Ss X}$ followed by $\Bb\mu_X$ on the term obtained.

It is easy to see that $\lambda$ defined in this manner is natural and satisfies axioms (i)-(iii) of Definition~\ref{def:dist-law} (see also Example~\ref{ex:dist-laws}). 

The only remaining axiom is (iv), which in principle could fail if the above procedure, on one of the terms $\tau_i$ present in $\lambda_X({\tt C})$, returned a stream that differs from the substream of $\lambda_X({\tt C})$ starting at $\tau_i$. This is, however, not the case, as can be proved by induction on $i$, using case analysis similar to that used in the proof of Lemma~\ref{lem:owefnw}.
\end{proof}

\begin{lemma}\label{lem:wbvvw}\rm
If a QM ${\cal M}$ terminates from the initial configuration, then there is no distributive law that extends $\rho_{\cal M}$.
\end{lemma}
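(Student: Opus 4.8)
The plan is to show that if $\cal M$ terminates, the forced prefix of $\lambda_0({\tt C})$ described in Lemma~\ref{lem:owefnw} cannot be extended one more step without violating the distributive-law axioms, exactly as in the archetypal Example~\ref{ex:ivbuqwvw}. First I would run $\cal M$ to its halting configuration: since $\cal M$ terminates from $(q_1,\$)$, there is a run $q_1,w_1 \mqmgoes{} \cdots \mqmgoes{} q_n,w_n$ with $w_n = a$ a single letter, such that $\delta_0(q_n)$ and $\delta_1(q_n,a)$ are undefined (and $\delta_2(q_n,a,b)$ is undefined too, since there is no second letter to read). By Lemma~\ref{lem:owefnw}, any distributive law $\lambda$ extending $\rho_{\cal M}$ must map ${\tt C}$ to a stream beginning with $\tau_0\goes{\$}\tau_1\goes{a_1}\cdots\goes{a_{n-1}}\tau_n$, where $\tau_n = {\tt q}_n(\tau_{n-1})$ because $|w_n| = 1$ forces $j = n-1$.

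The key step is to derive a contradiction from the shape of the $(n+1)$-st state $\tau_{n+1}$. Writing $\lambda_0({\tt C}) = {\tt C}\goes{\$}\tau_1\goes{a_1}\cdots\goes{a_{n-1}}{\tt q}_n(\tau_{n-1})\goes{a_n}\tau_{n+1}\goes{}\cdots$, I would apply axioms (ii) and (iv) (in the concrete form spelled out in Example~\ref{ex:dist-laws}) to the term $\tau_n = {\tt q}_n(\tau_{n-1})$: the substream of $\lambda_0({\tt C})$ starting at $\tau_n$ must equal $\lambda_X$ applied to $\overline{\tau_n} = {\tt q}_n$ applied to the stream starting at $\tau_{n-1}$, i.e.\ to ${\tt q}_n(\tau_{n-1}\goes{a_{n-1}}{\tt q}_n(\tau_{n-1})\goes{a_n}\tau_{n+1}\goes{}\cdots)$. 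Now I compute the first transition of this value two ways. On one hand it is $\tau_n\goes{a_n}\tau_{n+1} = {\tt q}_n(\tau_{n-1})\goes{a_n}\tau_{n+1}$. On the other hand, the input stream fed to ${\tt q}_n$ has first label $a_{n-1}$ and its first two labels are $a_{n-1},a_n$; since $\delta_0(q_n)$, $\delta_1(q_n,a_{n-1})$ and $\delta_2(q_n,a_{n-1},a_n)$ are all undefined (the first two because the machine halts at $(q_n,a)$ with $a = a_{n-1}$; the third because there is no matching {\bf R2} rule — $\delta_2(q_n,a_{n-1},\cdot)$ is undefined), the specification $\rho_{\cal M}$ contains \emph{no} rule whose conclusion has ${\tt q}_n$ applied to a state, so the extension diagram~\eqref{eq:ibvneverv} forces no transition at all out of ${\tt q}_n(-)$ with these premises. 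More carefully: any distributive law extending $\rho_{\cal M}$ must, by~\eqref{eq:ibvneverv} applied to ${\tt q}_n$ of the relevant depth-1 approximation, produce a first transition that agrees with $\rho_{\cal M}({\tt q}_n(\ldots))$; but $\rho_{\cal M}$ is undefined on ${\tt q}_n$ paired with any depth-$\le 2$ behaviour (no {\bf R0}, {\bf R1}, or {\bf R2} rule applies), so no value of $\lambda_0({\tt C})$ can satisfy~\eqref{eq:ibvneverv} at this node. Hence no $\lambda$ extending $\rho_{\cal M}$ exists.

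The main obstacle — and the point needing the most care — is the bookkeeping that turns ``$\cal M$ halts at $(q_n,a)$'' into ``$\rho_{\cal M}$ has no rule applicable to ${\tt q}_n$ applied to the continuation stream, \emph{regardless of its remaining labels}.'' This hinges on two facts: (1) the halting condition of Definition~\ref{MQM_def} says exactly that $\delta_0(q_n)$ and $\delta_1(q_n,a)$ are undefined, and since the queue has length $1$, $\delta_2(q_n,a,b)$ is vacuously irrelevant — but for the SOS argument I must note that whatever the $(n+1)$-st label $a_n$ is, the pair $(q_n,a_{n-1})$ already has $\delta_1$ undefined, so no {\bf R1} rule fires, and for {\bf R2} to fire we would need $\delta_2(q_n,a_{n-1},a_n)$ defined, contradicting disjointness combined with the fact that the machine's actual next move (if any) would have to be via $\delta_2$ — which is exactly what being halted rules out. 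I would state this as: the \emph{disjointly defined, jointly total} hypothesis plus the halting condition imply $\delta_0(q_n)$, $\delta_1(q_n,a_{n-1})$, and $\delta_2(q_n,a_{n-1},c)$ are undefined for every $c\in A$, hence no rule of~\eqref{eq:ssrules} has left-hand side ${\tt q}_n(x)$ with first premise transition labelled $a_{n-1}$. Once this is nailed down, the contradiction with~\eqref{eq:ibvneverv} is immediate, and combining Lemmas~\ref{lem:ibvwavv}, \ref{lem:awonbaerb} and \ref{lem:wbvvw} yields that $\rho_{\cal M}$ uniquely extends to a distributive law iff $\cal M$ does not terminate, which with Theorem~\ref{thm:qmundecid} proves Theorem~\ref{thm:ssundec}.
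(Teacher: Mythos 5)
Your setup is right (running $\cal M$ to its halting configuration, invoking Lemma~\ref{lem:owefnw} to force the prefix of $\lambda_0({\tt C})$, and using axioms (ii) and (iv) to identify the substream starting at $\tau_n={\tt q}_n(\tau_{n-1})$), but the crucial step is based on a misreading of Definition~\ref{MQM_def}. You assert that halting implies $\delta_2(q_n,a,c)$ is undefined for every $c$, and even claim this follows from the ``disjointly defined and jointly total'' hypothesis. It is exactly the opposite: joint totality says that for each $q,a,b$ \emph{exactly one} of $\delta_0(q)$, $\delta_1(q,a)$, $\delta_2(q,a,b)$ is defined, so once $\delta_0(q_n)$ and $\delta_1(q_n,a)$ are undefined, $\delta_2(q_n,a,c)$ \emph{must} be defined for every $c\in A$. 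The machine halts not because $\delta_2$ is undefined but because the queue holds a single letter and there is no second letter to feed to $\delta_2$. Consequently your conclusion --- that $\rho_{\cal M}$ has no rule with conclusion ${\tt q}_n(x)\goes{}\cdots$ whose first premise is labelled $a$, so the extension diagram cannot be satisfied --- is false; an {\bf R2} rule for $(q_n,a,a_{n+1})$ is present for whatever the next label $a_{n+1}$ of the stream happens to be. (Indeed, if no rule applied, $\rho_{\cal M}$ would not even be a well-defined natural transformation into $B\Ss X=A\times\Ss X$, since for streams every argument must be assigned exactly one transition; joint totality is what makes $\rho_{\cal M}$ total in the first place.)

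The correct contradiction is the one you gesture at in your opening sentence via Example~\ref{ex:ivbuqwvw} but then abandon: writing the continuation as $\tau_n\goes{a_{n+1}}\tau_{n+1}\goes{}\cdots$, the applicable {\bf R2} rule $\dfrac{x\goes{a}y\goes{a_{n+1}}z}{{\tt q}_n(x)\goes{b}{\tt q'}(z)}$ instantiated at $x=\tau_{n-1}$, $y=\tau_n$, $z=\tau_{n+1}$, combined with the extension property and axioms (iii)--(iv), forces $b=a_{n+1}$ and $\tau_{n+1}={\tt q'}(\tau_{n+1})$, which no finite term in $\Ss 0$ satisfies. So a rule \emph{does} fire, and it is the self-referential equation it imposes on $\tau_{n+1}$ --- not the absence of a rule --- that kills every candidate $\lambda$.
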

\begin{proof}
Assume to the contrary, that $\cal M$ terminates after $n$ steps in a configuration $(q_n,w_n)$ and there is a distributive law $\lambda$ that extends $\rho_{\cal M}$. By Lemma \ref{lem:owefnw}, the stream $\lambda_0({\tt C})$ begins with:
\[
{\tt C} \goes{a_1}\tau_1\goes{a_2}\tau_2\goes{a_3}\cdots\tau_{n-1}\goes{a_n}\tau_n
\] 
where $\tau_n={\tt q}_n(\tau_{n-|w_n|})$. Note that $\cal M$ can terminate in $(q_n,w_n)$ only if $w_n$ has length 1, hence $\tau_n={\tt q}_n(\tau_{n-1})$ and $w_n=a_n$; moreover, $\delta_0(q_n)$ and $\delta_1(q_n,a_n)$ must be undefined.

The remaining argument follows the line of Example~\ref{ex:ivbuqwvw}. Suppose that the next step in $\lambda_0({\tt C})$ is $\tau_n\goes{a_{n+1}}\tau_{n+1}$, for some $a_{n+1}\in A$ and $\tau_{n+1}\in\Ss 0$. Since $\delta_0(q_n)$ and $\delta_1(q_n,a_n)$ are undefined, $\delta_2(q_n,a_n,a_{n+1})=(q',b)$ must be defined for some $q'\in Q$ and $b\in A$. As a result, $\rho_{\cal M}$ contains an {\bf R2} rule:
\[
\dfrac{x\goes{a_n}y\goes{a_{n+1}}z} {{\tt q}_n(x)\goes{b}{\tt q'}(z)}
\]
and, since $\lambda$ extends $\rho_{\cal M}$, instantiating $x$ to $\tau_{n-1}$ we obtain $b=a_{n+1}$ and $\tau_{n+1}={\tt q'}(\tau_{n+1})$, a contradiction.
%
%
\end{proof}

Note that all rules in $\rho_{\cal M}$ are either GSOS or coGSOS rules; we call specifications with this property {\em mixed-GSOS specifications}. We arrive at a proof of our Claim from the Introduction:

\begin{theorem}\label{thm:ssundec}\rm
For the case of stream systems,
it is undecidable whether a given mixed-GSOS specification extends to a unique distributive law.
\end{theorem}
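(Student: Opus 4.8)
The plan is to assemble the four lemmas just proved into a reduction from the halting problem for queue machines (Theorem~\ref{thm:qmundecid}). First I would observe that the assignment ${\cal M}\mapsto\rho_{\cal M}$, where $\rho_{\cal M}$ is the biGSOS law determined by the rule scheme~\eqref{eq:ssrules}, is effective: given a finite description of a QM one writes down, in finite time, the finite set of rules {\bf C}, {\bf R0}, {\bf R1}, {\bf R2} dictated by the three transition functions, and this set is patently a mixed-GSOS stream specification (every rule is of GSOS or coGSOS shape, and all rules defining a given operation symbol ${\tt q}$ are coGSOS, while the single rule for ${\tt C}$ is GSOS). So it suffices to show that the map ${\cal M}\mapsto\rho_{\cal M}$ is a many-one reduction from non-termination of ${\cal M}$ on $(q_1,\$)$ to the property ``$\rho_{\cal M}$ extends to a unique distributive law''.

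Next I would verify both directions of the correspondence using the lemmas. If ${\cal M}$ does not terminate from $(q_1,\$)$, then by Lemma~\ref{lem:awonbaerb} there exists a distributive law $\lambda$ extending $\rho_{\cal M}$, and by Lemma~\ref{lem:ibvwavv} there is at most one such law; hence $\rho_{\cal M}$ extends to a \emph{unique} distributive law. Conversely, if ${\cal M}$ does terminate, then by Lemma~\ref{lem:wbvvw} there is \emph{no} distributive law extending $\rho_{\cal M}$, so in particular $\rho_{\cal M}$ does not extend to a unique one. Thus $\rho_{\cal M}$ extends to a unique distributive law if and only if ${\cal M}$ does not terminate from the initial configuration.

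I would then close the argument by contradiction: were the property ``a given mixed-GSOS stream specification extends to a unique distributive law'' decidable, composing that decision procedure with the effective map ${\cal M}\mapsto\rho_{\cal M}$ would decide non-termination of queue machines from their initial configuration, i.e., would decide the complement of the halting problem of Theorem~\ref{thm:qmundecid}; since the halting problem is undecidable, so is its complement, a contradiction. It is also worth remarking, as the Introduction promised, that in this reduction every instance $\rho_{\cal M}$ either extends to exactly one distributive law or to none, so the result is robust: no notion of ``canonical'' extension could rescue decidability.

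The bulk of the real work has already been discharged in Lemmas~\ref{lem:owefnw}--\ref{lem:wbvvw}, so the only genuine content left for this proof is the bookkeeping: checking that $\rho_{\cal M}$ is genuinely a mixed-GSOS specification (and that its construction from ${\cal M}$ is computable), and correctly matching the ``unique distributive law'' clause of the theorem against the disjunction ``unique law'' vs.\ ``no law'' coming from the lemmas. The one point that needs a little care — and is the closest thing to an obstacle — is that the theorem statement speaks of extension to a \emph{unique} law, whereas termination yields \emph{nonexistence}; one must be explicit that nonexistence is not uniqueness, so that the biconditional with non-termination is exact and the reduction is sound in both directions.
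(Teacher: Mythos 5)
Your proof is correct and takes exactly the route the paper itself takes: its proof of Theorem~\ref{thm:ssundec} is precisely ``combine Lemmas~\ref{lem:ibvwavv}--\ref{lem:wbvvw} with Theorem~\ref{thm:qmundecid}''. You merely spell out the bookkeeping (effectiveness of ${\cal M}\mapsto\rho_{\cal M}$, the mixed-GSOS check, and the exact biconditional) that the paper leaves implicit.
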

\begin{proof}
Combine Lemmas~\ref{lem:ibvwavv}-\ref{lem:wbvvw} with Theorem~\ref{thm:qmundecid}
\end{proof}

\section{Labelled transition systems}\label{sec:qm2lts}

We shall now show how to encode Queue Machines into mixed-GSOS specifications for LTSs, to prove that distributive laws admit no format for $BX=\Pf(A\times X)$ either. Since the general idea and most technical details are the same as in the case of stream systems (Section~\ref{sec:qm2ss}), we only sketch the differences between the two cases.

To begin with, note that the set of rules~\eqref{eq:ssrules} from Section~\ref{sec:qm2ss} can be read as rules in the mixed-GSOS format for $BX=\Pf(A\times X)$. However, taking the same rules for a QM ${\cal M}$ would give rise to a biGSOS law that always extends to some distributive law (a counterpart of Lemma~\ref{lem:wbvvw} would fail). Intuitively, unlike in the case of $BX=A\times X$, a distributive law for $BX=\Pf(A\times X)$ is allowed to produce an empty set of successors for a term that corresponds to a terminating configuration of ${\cal M}$.

Our solution is to extend the specification~\eqref{eq:ssrules}, now understood as a mixed-GSOS specification for the LTS behaviour, with additional rules:
\begin{equation}\label{eq:negrule}
\dfrac{x\goes{a}y\quad y{\not \goes{} }}{{\tt q}(x)\goes{a}{\tt q}(x)}\ (\textbf{R2'})
\end{equation}
for $q\in Q$ and $a\in A$. These new rules are included whenever $\delta_0(q)$ and $\delta_1(q,a)$ are undefined.
We denote the biGSOS law defined by the extended specification by $\rho_{EXT}:\S\Bb\To B\Ss$, where $BX=\Pf(A\times X)$ and $\S X = 1+Q\times X$.

For any QM $\cal M$, the biGSOS law $\rho_{EXT}$ uniquely extends to a distributive law if and only if $\cal M$ does not terminate from the initial configuration. The proof of this follows the line of Section~\ref{sec:qm2ss}, and we shall only explain the main differences here.

The main technical step in Section~\ref{sec:qm2ss}, Lemma~\ref{lem:owefnw}, holds in a very similar form:

\begin{lemma}\label{lem:osnvvsv}\rm
For every $n>0$, if a QM ${\cal M}$ makes $n-1$ steps from the initial configuration as in Lemma~\ref{lem:owefnw}, 
then every distributive law $\lambda$ that extends $\rho_{EXT}$ maps the constant symbol ${\tt C}\in \Ss\Bb0$ to a tree $\lambda_0({\tt C})\in\Bb\Ss0$ that begins with
a degenerate tree, i.e., a sequence:
\[
	\tau_0\goes{\$}\tau_1\goes{a_1}\tau_2\goes{a_3}\tau_3\goes{a_3}\cdots\goes{a_{n-1}}\tau_n
\]
where $a_i$ and $\tau_i$ are as in Lemma~\ref{lem:owefnw}.
\end{lemma}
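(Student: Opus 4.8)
The plan is to mirror the proof of Lemma~\ref{lem:owefnw}, proceeding by induction on $n$, and to observe that the extra rule \textbf{R2'} from~\eqref{eq:negrule} never interferes with the argument, because at every configuration reached along a non-terminating prefix the queue has length at least one, and in fact for the relevant step the machine has a genuinely applicable $\delta_0$, $\delta_1$ or $\delta_2$ transition, so \textbf{R2'} is not the rule that forces the next node of the tree.

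\begin{proof}
We proceed by induction on $n$, exactly as in the proof of Lemma~\ref{lem:owefnw}. The only difference is that the cofree comonad $\Bb$ now consists of finitely branching trees rather than streams, so a priori $\lambda_0({\tt C})\in\Bb\Ss 0$ could have a branching root or branching internal nodes. We must therefore check that the extension axioms, together with the rules of $\rho_{EXT}$, force the initial fragment of this tree to be a degenerate (linear) tree of the stated shape.

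For the base case $n=1$, rule \textbf{C} is the unique rule with ${\tt C}$ on the left of its conclusion, and it is a GSOS rule with a single transition in the conclusion; hence by~\eqref{eq:ibvneverv} applied to ${\tt C}\in\Sigma\Bb 0$, the root of $\lambda_0({\tt C})$ has exactly one outgoing edge, labelled $\$$, leading to a node coloured ${\tt q}_1({\tt C})$, matching the statement. For the inductive step, assume ${\cal M}$ makes $n$ steps; by the inductive hypothesis $\lambda_0({\tt C})$ begins with the degenerate tree $\tau_0\goes{\$}\tau_1\goes{a_1}\cdots\goes{a_{n-1}}\tau_n$ with $\tau_n={\tt q}_n(\tau_j)$. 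As in Lemma~\ref{lem:owefnw} we know from axioms (ii) and (iv) of Definition~\ref{def:dist-law} that $\lambda_0(\overline{\tau_n})$ equals the subtree of $\lambda_0({\tt C})$ rooted at $\tau_n$, where $\overline{\tau_n}\in\Ss\Bb 0$ is obtained from $\tau_n$ by replacing each colour with the subtree of $\lambda_0({\tt C})$ starting at it. Since ${\cal M}$ does not terminate in $(q_n,w_n)$, exactly one of $\delta_0(q_n)$, $\delta_1(q_n,a_j)$ or $\delta_2(q_n,a_j,a_{j+1})$ is defined; in each of the three cases the corresponding rule \textbf{R0}, \textbf{R1} or \textbf{R2} is in $\rho_{EXT}$, and applying~\eqref{eq:ibvneverv} at $\tau_n$ (reading off the first-level behaviour via $\pi\Ss$) forces a single outgoing edge from $\tau_n$ labelled with the newly added letter $a_n$, leading to $\tau_{n+1}={\tt q}_{n+1}(\tau_{j'})$ with $j'$ as dictated by the appropriate queue shift, exactly as computed in Lemma~\ref{lem:owefnw}. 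The point to check is that \textbf{R2'} does not add a second outgoing edge: whenever $\delta_0(q_n)$ or $\delta_1(q_n,a_j)$ is defined, the side condition of \textbf{R2'} is not met, so no \textbf{R2'} rule for $q_n$ is in $\rho_{EXT}$ at all; and when only $\delta_2(q_n,a_j,a_{j+1})$ is defined, \textbf{R2'} for $q_n$ requires the target $y$ of the first transition to have \emph{no} outgoing transitions, whereas along our prefix that target is coloured $\tau_{j+1}$ and, since $n-j\geq 2$, it does have an outgoing edge in $\lambda_0({\tt C})$ (labelled $a_{j+1}$), so the premise of \textbf{R2'} fails and it contributes nothing. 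Hence the first level of behaviour at $\tau_n$ consists of the single edge to $\tau_{n+1}$, the degenerate shape is preserved, and the induction goes through.
\end{proof}

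The only genuinely new obstacle compared with Lemma~\ref{lem:owefnw} is ruling out spurious branching caused by \textbf{R2'}; the argument above is the crux, and it hinges entirely on the observation that along a non-terminating prefix the node that \textbf{R2'} would test for a dead-end is never actually a dead-end in $\lambda_0({\tt C})$. Everything else --- the bookkeeping of queue contents $w_i$ against the indices $j$ with $\tau_i={\tt q}_i(\tau_j)$ and $i-j=|w_i|$ --- is literally identical to Lemma~\ref{lem:owefnw} and need not be repeated.
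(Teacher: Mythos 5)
Your proof is correct and follows essentially the same route as the paper, which merely asserts that the induction of Lemma~\ref{lem:owefnw} carries over because $\rho_{EXT}$ is deterministic (one transition from ${\tt C}$, and at most one transition for ${\tt q}(x)$ when $x$ has at most one); your explicit check that \textbf{R2'} never fires along a non-terminating prefix is precisely the content of that remark, spelled out. One small imprecision: when $\delta_0(q_n)$ is undefined but $\delta_1(q_n,a_j)$ is defined, instances of \textbf{R2'} for $q_n$ with labels $a\neq a_j$ \emph{are} present in $\rho_{EXT}$ (the side condition only excludes the pair $(q_n,a_j)$); they still contribute nothing, but because the root of the subtree at $\tau_j$ has only the $a_j$-labelled edge so their premise fails, not because the rules are absent.
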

\begin{proof}
By induction on $n$ entirely analogous to the proof of Lemma~\ref{lem:owefnw}. Intuitively, the initial part of $\lambda_0({\tt C})$ is degenerate because the specification $\rho_{EXT}$ is deterministic, i.e., it only infers one transition from ${\tt C}$, and infers at most one transition for ${\tt q}(x)$ if $x$ can make at most one transition.
\end{proof}

The next two lemmas are proved entirely analogously to Section~\ref{sec:qm2ss}:

\begin{lemma}\label{lem:wainbvwavaw}\rm
For an QM ${\cal M}$ that does not terminate from the initial configuration, the transformation $\rho_{EXT}$ is extended by at most one distributive law.
\end{lemma}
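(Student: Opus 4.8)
The plan is to adapt the proof of Lemma~\ref{lem:ibvwavv} almost verbatim, since the only substantive change is the behaviour functor ($BX=\Pf(A\times X)$ instead of $BX=A\times X$) and the presence of the extra rules {\bf R2'}, neither of which disturbs the uniqueness argument. So the approach is: take two distributive laws $\lambda,\lambda':\Ss\Bb\To\Bb\Ss$ that both extend $\rho_{EXT}$, and show that $\pi\Ss\circ\lambda=\pi\Ss\circ\lambda'$; Lemma~\ref{lem:wonvaev}(b) then gives $\lambda=\lambda'$. As before, this is done by structural induction on terms $t\in\Ss\Bb X$, with the two base cases $t=\sigma\in\Bb X$ (handled by axiom (i)) and $t={\tt C}$ (handled by Lemma~\ref{lem:osnvvsv}, using that a non-terminating ${\cal M}$ runs forever), and the inductive step reducing $\lambda_X({\tt q}(t))$ to $\lambda_X(t)$.

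First I would fix $t$ with $\lambda_X(t)=\lambda'_X(t)=:{\tt T}$, a tree in $\Bb X$, and argue that $\theta$-applied to $\lambda_X({\tt q}(t))$ is determined by ${\tt T}$ and the rules of $\rho_{EXT}$ defining ${\tt q}$. The new point compared with Section~\ref{sec:qm2ss} is that the root of $\lambda_X({\tt q}(t))$ may now carry several outgoing edges, so instead of pinning down ``the first transition'' one pins down ``the set of first transitions''. Concretely, writing $\pi_X({\tt T})=\{(a_k,x_k)\}_k\subseteq A\times X$ for the immediate transitions at the root of ${\tt T}$ together with the two-step lookahead data visible through one more application of $\theta$, the extension condition~\eqref{eq:ibvneverv} forces $\pi_{\Ss X}(\lambda_X({\tt q}(t)))$ to equal a fixed set built from: the {\bf R0} rule for ${\tt q}$ (if $\delta_0(q)$ is defined), the {\bf R1} rules matched against the $(a_k,x_k)$ (if $\delta_1(q,a_k)$ is defined), the {\bf R2} rules matched against two-step paths, and the {\bf R2'} rules matched against one-step paths to leaves of ${\tt T}$. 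Since all of this data is read off from ${\tt T}$, which by the inductive hypothesis is common to $\lambda$ and $\lambda'$, the sets $\pi_{\Ss X}(\lambda_X({\tt q}(t)))$ and $\pi_{\Ss X}(\lambda'_X({\tt q}(t)))$ coincide.

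From $\pi\Ss\circ\lambda=\pi\Ss\circ\lambda'$ on all of $\Ss\Bb$, Lemma~\ref{lem:wonvaev}(b) immediately yields $\lambda=\lambda'$, which is the claim. The one place where I would be slightly careful — and which I expect to be the main (though still minor) obstacle — is the bookkeeping in the inductive step: unlike the stream case, a single application of $\theta$ to ${\tt q}(t)$ already exposes a whole set of successor terms, and the {\bf R2}/{\bf R2'} rules require looking at the depth-$2$ structure of ${\tt T}$ (two-step paths versus one-step paths to leaves), so one must verify that exactly this much of ${\tt T}$ — which is precisely what~\eqref{eq:ibvneverv} controls, since {\bf R2} and {\bf R2'} are coGSOS rules with lookahead $1$ — is enough to determine $\theta_{\Ss X}(\lambda_X({\tt q}(t)))$, and that the common value of $\lambda_X(t)$ and $\lambda'_X(t)$ already fixes it. Everything else is routine and identical to the proofs in Section~\ref{sec:qm2ss}, so I would simply state that the argument is ``entirely analogous'' and only spell out the {\bf R2'} case as the representative new ingredient.
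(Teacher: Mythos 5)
Your proposal matches the paper's intent exactly: the paper gives no separate proof of this lemma, stating only that it is ``entirely analogous'' to Lemma~\ref{lem:ibvwavv}, and your adaptation --- structural induction on $t\in\Ss\Bb X$, base cases via axiom (i) and Lemma~\ref{lem:osnvvsv}, the inductive step pinning down the \emph{set} of root transitions of $\lambda_X({\tt q}(t))$ from the common tree $\lambda_X(t)=\lambda'_X(t)$ via the {\bf R0}--{\bf R2} and {\bf R2'} rules, and then Lemma~\ref{lem:wonvaev}(b) --- is precisely that analogue. Your explicit attention to the {\bf R2'} case and to the fact that one now tracks finite sets of successors rather than a single one is the right (and only) new ingredient.
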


\begin{lemma}\label{lem:tyjdrtvx}\rm
If a QM ${\cal M}$ does not terminate from the initial configuration, then there exists a distributive law that extends $\rho_{EXT}$.
\end{lemma}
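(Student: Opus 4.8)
The plan is to mimic the construction in Lemma~\ref{lem:awonbaerb}, but now working with trees in $\Bb\Ss X$ rather than streams, and being careful about the new rule \textbf{R2'}. First I would fix a QM ${\cal M}$ that does not terminate from $(q_1,\$)$, and use Lemma~\ref{lem:osnvvsv} to pin down what $\lambda_0({\tt C})$ must look like: it is forced to begin with the degenerate (linear) tree $\tau_0\goes{\$}\tau_1\goes{a_1}\tau_2\goes{}\cdots$, where the $\tau_i$ and $a_i$ track the run of ${\cal M}$ exactly as in Section~\ref{sec:qm2ss}. Since ${\cal M}$ never terminates, this prescribes an infinite linear tree, which I take as the definition of $\lambda_X({\tt C})$. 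The point is that because ${\cal M}$ is deterministic and non-terminating, there is never a ``choice'' to be made and never a dead end, so the prescription is total and consistent.

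Next I would define $\lambda_X$ on all other terms of $\Ss\Bb X$. As in Lemma~\ref{lem:awonbaerb}, observe that the specification $\rho_{EXT}$ minus the rule \textbf{C} is a coGSOS specification (rules \textbf{R0}, \textbf{R1}, \textbf{R2}, \textbf{R2'} are all coGSOS rules for $BX=\Pf(A\times X)$ --- \textbf{R2'} uses lookahead and negative premises but has a flat conclusion, which is allowed), hence by Theorem~\ref{thm:GSOSextends} it is extended by a unique distributive law $\hat\lambda$. For a term ${\tt t}\in\Ss\Bb X$ not containing ${\tt C}$, set $\lambda_X({\tt t})=\hat\lambda_X({\tt t})$; for a term containing ${\tt C}$, substitute the already-defined tree $\lambda_X({\tt C})$ for each occurrence of ${\tt C}$, apply $\hat\lambda_{\Ss X}$, and then flatten with $\Bb\mu_X$. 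Naturality and axioms (i)--(iii) of Definition~\ref{def:dist-law} follow exactly as in Lemma~\ref{lem:awonbaerb}, since these axioms hold for $\hat\lambda$ and are preserved by the substitution-and-flatten construction, and axiom (i) plus the base case for ${\tt C}$ are arranged by hand.

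The only axiom that needs real work is (iv): the decompositional law, which for trees requires that if $\lambda_X({\tt q}(t))$ has a subtree rooted at some $\tau$, then re-running $\lambda$ on $\overline{\tau}$ (with each state-variable replaced by the subtree hanging off it) reproduces that same subtree. For terms not involving ${\tt C}$ this is just axiom (iv) for $\hat\lambda$. The interesting case is the infinite linear tree $\lambda_X({\tt C})$ and terms built from it: one must check that the subtree of $\lambda_X({\tt C})$ rooted at $\tau_i$ equals $\lambda_X(\overline{\tau_i})$. Since $\tau_i={\tt q}_i(\tau_j)$ and the whole tree is degenerate, this amounts to verifying, by induction on $i$, that the coGSOS law $\hat\lambda$ applied to ${\tt q}_i$ applied to the tail starting at $\tau_j$ produces exactly the tail starting at $\tau_i$ --- and this is precisely what the inductive construction matching the run of ${\cal M}$ guarantees, via the same case analysis on $\delta_0,\delta_1,\delta_2$ (now also $\delta_2$ when lookahead is available, and never the \textbf{R2'} branch since the linear tree never stalls) as in Lemma~\ref{lem:owefnw}.

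The main obstacle is getting axiom (iv) right in the tree setting: one must be confident that the subtrees appearing inside $\lambda_X({\tt C})$ and inside $\hat\lambda_X$-images of ${\tt C}$-containing terms are exactly the trees the construction would independently assign to them, and in particular that the presence of the negative-premise rule \textbf{R2'} does not introduce a spurious extra branch anywhere along the (purely linear, since ${\cal M}$ never terminates) spine. Once the degeneracy of $\lambda_X({\tt C})$ is exploited --- every node along it has exactly one successor because ${\cal M}$ is deterministic and does not halt, so \textbf{R2'} is never triggered --- the verification reduces to the stream-case bookkeeping of Lemma~\ref{lem:awonbaerb}, and the rest is routine.
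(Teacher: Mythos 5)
Your proposal is correct and follows essentially the same route as the paper: the paper itself simply says the law is ``exactly as in the proof of Lemma~\ref{lem:awonbaerb}, with the streams produced in the latter considered as (degenerate) trees,'' and you have filled in the same construction, including the key observations that the coGSOS fragment (now including \textbf{R2'}, which is a legitimate coGSOS rule with negative premises as in Example~\ref{ex:co-gsos-neg}) yields $\hat\lambda$ via Theorem~\ref{thm:GSOSextends}, and that determinism plus non-termination keeps the spine of $\lambda_0({\tt C})$ linear so \textbf{R2'} never fires along it. Nothing further is needed.
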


In particular, the distributive law defined in Lemma~\ref{lem:tyjdrtvx} is exactly as in the proof of Lemma~\ref{lem:awonbaerb}, with the streams produced in the latter considered as (degenerate) trees.

The only step that requires some care is Lemma~\ref{lem:wbvvw}, which now takes the form:

\begin{lemma}\label{lem:bnobbes}\rm
If a QM ${\cal M}$ terminates from the initial configuration, then there is no distributive law that extends $\rho_{EXT}$.
\end{lemma}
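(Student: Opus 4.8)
The plan is to mimic the argument of Lemma~\ref{lem:wbvvw} and Example~\ref{ex:ivbuqwvw}, but now being careful about the extra freedom that the powerset behaviour provides. So suppose $\cal M$ terminates after $n$ steps in a configuration $(q_n,w_n)$, and assume for contradiction that a distributive law $\lambda$ extending $\rho_{EXT}$ exists. By Lemma~\ref{lem:osnvvsv}, $\lambda_0({\tt C})$ begins with the degenerate tree
\[
	{\tt C}\goes{\$}\tau_1\goes{a_2}\tau_2\goes{a_3}\cdots\goes{a_n}\tau_n,
\]
where, since $\cal M$ terminates in $(q_n,w_n)$, we must have $|w_n|=1$, hence $w_n=a_n$, $\tau_n={\tt q}_n(\tau_{n-1})$, and both $\delta_0(q_n)$ and $\delta_1(q_n,a_n)$ are undefined.

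The key extra step compared with Lemma~\ref{lem:wbvvw} is to rule out the possibility that the node $\tau_n$ in the tree $\lambda_0({\tt C})$ is a leaf. First I would argue that $\tau_{n-1}$ (the color of the node immediately preceding $\tau_n$) is \emph{not} a leaf in $\lambda_0({\tt C})$: indeed by Lemma~\ref{lem:osnvvsv} this node has the outgoing edge $\tau_{n-1}\goes{a_n}\tau_n$, so it has at least one successor. Now I would apply axiom (iv) of Definition~\ref{def:dist-law} to pull out the substream/subtree rooted at $\tau_n$: it equals $\lambda_X(\overline{\tau_n})$ where $\overline{\tau_n}={\tt q}_n(\overline{\tau_{n-1}})$ and $\overline{\tau_{n-1}}$ is the tree hanging off $\tau_{n-1}$ in $\lambda_0({\tt C})$, which we just observed is \emph{not} a leaf. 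But then, because $\delta_0(q_n)$ and $\delta_1(q_n,a_n)$ are undefined, neither rule {\bf R0}, {\bf R1} nor the new rule {\bf R2'} applies to ${\tt q}_n(x)$ when $x$ has an outgoing transition (R2' fires only when $x$ reaches a \emph{leaf}, i.e.\ a node \emph{without} successors), so the only rules that could fire are the {\bf R2} rules. Since $\lambda$ extends $\rho_{EXT}$, using~\eqref{eq:ibvneverv} on ${\tt q}_n(\lambda_0({\tt C})^{(n)})$ together with axiom (iii) as in Example~\ref{ex:ivbuqwvw}, each first transition of $\lambda_X(\overline{\tau_n})$ is of the form $\tau_n\goes{b}{\tt q'}(\tau_{n+1})$ with $\delta_2(q_n,a_n,a_{n+1})=(q',b)$, where $\tau_n\goes{a_{n+1}}\tau_{n+1}$ is a transition produced for $\tau_n$; combining with axiom (iv) at level $n$, this forces $\tau_{n+1}={\tt q'}(\tau_{n+1})$, which has no solution in $\Ss 0$ --- a contradiction.

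To make the quantifier ``$\tau_n$ must have at least one successor'' airtight I would spell out the case analysis: either $\tau_n$ is a leaf in $\lambda_0({\tt C})$, or it has some successor. If it has a successor $\tau_n\goes{a_{n+1}}\tau_{n+1}$, the paragraph above derives a contradiction. If $\tau_n$ is a leaf, I invoke axiom (iv) at index $n-1$: the subtree of $\lambda_0({\tt C})$ rooted at $\tau_{n-1}$ equals $\lambda_X({\tt q}_{n-1}(\dots))$ applied to the appropriate decomposition, and among the rules defining ${\tt q}_{n-1}$ at least one must fire (since $\rho_{EXT}$ is jointly total on the relevant configuration, exactly as in the ${\cal M}\mqmgoes{}$ analysis), producing a transition out of $\tau_{n-1}$ whose target is governed by the same recursion; iterating, one shows the degenerate prefix cannot simply stop at $\tau_n$ --- equivalently, one uses that $\cal M$ is still defined (but now terminating) at step $n$, so R2 does fire for ${\tt q}_n$ on any argument with a length-$2$ lookahead path, and by naturality and Lemma~\ref{lem:osnvvsv} such a path exists once $\tau_n$ is not a leaf. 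The cleanest packaging is probably to prove first, by the same induction as Lemma~\ref{lem:osnvvsv} but run ``one step past termination'', that $\lambda_0({\tt C})$ must in fact contain the edge $\tau_n\goes{a_{n+1}}\tau_{n+1}$ for some $a_{n+1},\tau_{n+1}$, and only then derive $\tau_{n+1}={\tt q'}(\tau_{n+1})$.

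The main obstacle I anticipate is exactly this leaf-versus-successor dichotomy: in the powerset setting a distributive law is \emph{a priori} free to make $\tau_n$ a leaf, and the new rule {\bf R2'} was added precisely so that ${\tt q}_n$ of a leaf can loop --- so one must show that $\tau_n$ being a leaf is inconsistent with the \emph{rest} of the already-forced prefix (namely with $\tau_{n-1}$ having the outgoing edge to $\tau_n$, propagated via axiom (iv)). Once it is established that $\tau_n$ genuinely has an outgoing transition in $\lambda_0({\tt C})$, the remaining reasoning is a routine transcription of Example~\ref{ex:ivbuqwvw} and Lemma~\ref{lem:wbvvw}, using that only {\bf R2} rules are available for ${\tt q}_n$ on non-terminating arguments. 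Finally, Theorem~\ref{thm:ltsundec} follows by combining Lemmas~\ref{lem:wainbvwavaw}, \ref{lem:tyjdrtvx} and~\ref{lem:bnobbes} with Theorem~\ref{thm:qmundecid}, exactly as Theorem~\ref{thm:ssundec} does in the stream case.
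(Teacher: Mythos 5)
You have correctly isolated the leaf-versus-successor dichotomy as the new difficulty, and the leaf case is within reach, though your route to it is off target: the direct argument is simply that if $\tau_n$ is a leaf then the rule {\bf R2'} for ${\tt q}_n$ and $a_n$ (included because $\delta_0(q_n)$ and $\delta_1(q_n,a_n)$ are undefined) fires on the premise $\tau_{n-1}\goes{a_n}\tau_n{\not\goes{}}$ and produces a transition out of $\tau_n={\tt q}_n(\tau_{n-1})$, contradicting leafness. Your detour through the rules for ${\tt q}_{n-1}$, axiom (iv) at index $n-1$, and an ``iteration one step past termination'' does not establish anything about the successors of $\tau_n$; the relevant operator is ${\tt q}_n$, and the relevant rule is {\bf R2'} alone.

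The genuine gap is in the non-leaf case. When $\tau_n$ has successors, the powerset behaviour permits it to have \emph{several}, and the {\bf R2} rules then only force a set-level closure property: every successor $\tau'$ of $\tau_n$ must be of the form ${\tt q'}(z)$ for \emph{some} successor $z$ of $\tau_n$ (the lookahead target of the rule instance deriving $\tau_n\goes{b}\tau'$). Nothing identifies the successor produced by a rule instance with the successor used as its lookahead, so the individual fixed-point equation $\tau_{n+1}={\tt q'}(\tau_{n+1})$ that you claim ``is forced'' does not follow; a priori the successor set could be something like $\{z,{\tt q'}(z)\}$ with no single element a fixed point. This is exactly the point where the stream argument of Lemma~\ref{lem:wbvvw} fails to transfer. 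The paper closes the gap with a well-foundedness argument: pick a successor $\tau'$ of $\tau_n$ of minimal nesting depth of the operations ${\tt q}_i$; since $\tau'={\tt q'}(z)$ for some successor $z$ of $\tau_n$, that $z$ is a successor of strictly smaller depth, contradicting minimality. With that replacement (and the streamlined leaf case above) your proof coincides with the paper's.
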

\begin{proof}
Assume to the contrary, that $\cal M$ terminates after $n$ steps in a configuration and there is a distributive law $\lambda$ that extends $\rho_{\cal M}$. By Lemma~\ref{lem:osnvvsv}, the tree $\lambda_0({\tt C})$ begins with a sequence:
\[
{\tt C} \goes{a_1}\tau_1\goes{a_2}\tau_2\goes{a_3}\cdots\tau_{n-1}\goes{a_n}\tau_n
\] 
where, as in the proof of Lemma~\ref{lem:wbvvw}, $\tau_n={\tt q}_n(\tau_{n-1})$, and $\delta_0(q_n)$ and $\delta_1(q_n,a_n)$ are undefined.

What successors can $\tau_n$ have in the tree $\lambda_0({\tt C})$? Assume first that is has no successors. Since $\lambda$ extends $\rho_{EXT}$, by applying a corresponding rule {\bf R2'} instantiated to ${\tt q}={\tt q}_n$, $x=\tau_{n-1}$ and $a=a_{n}$ we infer that $\tau_n={\tt q}(\tau_{n-1})$ indeed does have at least one successor, which is a contradiction.

Now assume that $\tau_n$ has some successors. All these successors are terms in $\Ss 0$. Some of these successors are minimal, i.e., have the smallest depth of nesting of operations $\tt q_i$. Pick one of these minimal successors and call it $\tau'$. Since $\lambda$ extends $\rho_{EXT}$, the transition $\tau_n\goes{b}\tau'$ must be derivable from rules in $\rho_{EXT}$. The only rule that can be used to this end is a corresponding rule {\bf R2}, instantiated to ${\tt q}={\tt q}_n$, $x=\tau_{n-1}$, $y=\tau_n$ and $a=a_{n}$. But this means that $\tau_n$ must have a successor $z$ such that $\tau'={\tt q'}(z)$, which contradicts the minimality of $\tau'$.
\end{proof}

Thus we prove our Claim from the Introduction for the case of LTSs:

\begin{theorem}\label{thm:ltsundec}\rm
For the case of labeled transition systems ($BX=\Pf(A\times X)$),
it is undecidable whether a given mixed-GSOS specification extends to a unique distributive law.
\end{theorem}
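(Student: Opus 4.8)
The plan is to reduce from the termination problem for queue machines (Theorem~\ref{thm:qmundecid}) exactly as in the stream case, so the bulk of the work is packaged into Lemmas~\ref{lem:osnvvsv}--\ref{lem:bnobbes}. First I would observe that the encoding of a QM $\cal M$ into the mixed-GSOS specification for $BX=\Pf(A\times X)$ is the one given by the rules~\eqref{eq:ssrules} together with the extra negative rules~\eqref{eq:negrule}, yielding the biGSOS law $\rho_{EXT}$; this is manifestly a mixed-GSOS specification, since {\bf C} is a (trivial) GSOS rule and {\bf R0},{\bf R1},{\bf R2},{\bf R2'} are all coGSOS rules, and all rules defining a given ${\tt q}$ are coGSOS. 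Then the theorem follows by combining Lemmas~\ref{lem:wainbvwavaw}, \ref{lem:tyjdrtvx} and \ref{lem:bnobbes}: if $\cal M$ does not terminate from $(q_1,\$)$ then $\rho_{EXT}$ extends to exactly one distributive law, and if it does terminate then it extends to none; since a unique extension exists in the first case and no extension at all in the second, the set of mixed-GSOS specifications admitting a unique extension is not decidable, as deciding it would decide QM termination.

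The three supporting lemmas are all adapted from Section~\ref{sec:qm2ss}. Lemma~\ref{lem:osnvvsv} is the analogue of Lemma~\ref{lem:owefnw}: by induction on $n$, using that $\rho_{EXT}$ is deterministic in the relevant sense (it infers exactly one transition from ${\tt C}$, and from ${\tt q}(x)$ it infers at most one transition when $x$ has at most one), the tree $\lambda_0({\tt C})$ begins with a degenerate (linear) tree matching the run of $\cal M$, with the same $a_i$ and $\tau_i$ as before. Lemma~\ref{lem:wainbvwavaw} then follows exactly as Lemma~\ref{lem:ibvwavv}: by structural induction on $t\in\Ss\Bb X$ one shows $\pi\Ss\circ\lambda=\pi\Ss\circ\lambda'$ for any two extensions $\lambda,\lambda'$ (the two base cases use axiom~(i) and Lemma~\ref{lem:osnvvsv}; the step uses the case analysis on $\delta_0,\delta_1,\delta_2$, now also accounting for {\bf R2'} when the relevant state has an empty-successor premise), and then Lemma~\ref{lem:wonvaev}(b) gives $\lambda=\lambda'$. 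Lemma~\ref{lem:tyjdrtvx} reuses the construction of Lemma~\ref{lem:awonbaerb} verbatim: apart from {\bf C} the specification is coGSOS, so Theorem~\ref{thm:GSOSextends} gives $\hat\lambda$; define $\lambda_X({\tt C})$ by the explicit stream/degenerate tree read off the infinite run, and extend to all of $\Ss\Bb X$ via $\hat\lambda$, $\hat\lambda_{\Ss X}$ and $\Bb\mu_X$, checking naturality and axioms (i)--(iii) and then axiom (iv) by induction on the index $i$ of the $\tau_i$.

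The step that really needs care, and where I expect the main obstacle, is Lemma~\ref{lem:bnobbes}: in the LTS setting a distributive law is not forced to produce a successor for every term, so the naive argument from Lemma~\ref{lem:wbvvw} breaks and this is precisely why the extra rules~\eqref{eq:negrule} were added. The argument splits into two cases for the successors of $\tau_n={\tt q}_n(\tau_{n-1})$ in $\lambda_0({\tt C})$, where $\delta_0(q_n),\delta_1(q_n,a_n)$ are undefined. If $\tau_n$ has no successor, then rule {\bf R2'} applies (instantiated with ${\tt q}={\tt q}_n$, $x=\tau_{n-1}$, $a=a_n$, using that $\tau_{n-1}$ makes exactly the transition labelled $a_n$ to the successorless $\tau_n$), forcing $\tau_n$ to have a successor --- contradiction. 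If $\tau_n$ has successors, pick one of minimal ${\tt q}$-nesting depth, say $\tau'$ with $\tau_n\goes{b}\tau'$; the only rule that can derive this transition is an {\bf R2} rule (since {\bf R0},{\bf R1} are excluded by $\delta_0(q_n),\delta_1(q_n,a_n)$ undefined, and {\bf R2'} only produces the self-loop ${\tt q}_n(\tau_{n-1})\goes{a_n}{\tt q}_n(\tau_{n-1})=\tau_n$, which is not a strictly-smaller successor), and {\bf R2} forces $\tau'={\tt q}'(z)$ for some successor $z$ of $\tau_n$, contradicting minimality of $\tau'$. The subtlety to get right is exactly the exhaustive case analysis of which rules can fire at a terminating state and the bookkeeping that the self-loop produced by {\bf R2'} does not itself count as a genuine minimal successor; once that is pinned down the contradiction is clean. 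Finally I would assemble the pieces into Theorem~\ref{thm:ltsundec} in one line, as done above.
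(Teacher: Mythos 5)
Your proposal matches the paper's proof: the same reduction from Theorem~\ref{thm:qmundecid} via $\rho_{EXT}$ (rules~\eqref{eq:ssrules} plus the negative rules~\eqref{eq:negrule}), the same four supporting lemmas adapted from Section~\ref{sec:qm2ss}, and the same two-case analysis of the successors of $\tau_n$ in Lemma~\ref{lem:bnobbes}. One small point on that second case: the clean reason {\bf R2'} cannot derive the transition to the minimal successor is that its negative premise $y{\not\goes{}}$ fails as soon as $\tau_n$ has any successor, rather than the depth comparison you give (minimality of $\tau'$ alone does not rule out $\tau'=\tau_n$).
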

\begin{proof}
Combine Lemmas~\ref{lem:wainbvwavaw}-\ref{lem:bnobbes} and Theorem~\ref{thm:qmundecid}.
\end{proof}

\section{Related work}\label{sec:related}

We have proved, for the case of stream systems and LTSs, that there is no format for distributive laws of monads over comonads that would be complete for mixed-GSOS specification, i.e., that would cover exactly those mixed-GSOS specification that extend to a distributive law. The specifications used in our proofs are actually coGSOS specifications extended with only one GSOS rule that has no premises. Moreover, the coGSOS rules only uses lookahead of depth 2, and the GSOS rule uses a rule conclusion of height 2. As a result, there is no complete format even for such restricted specifications.

On the other hand, our results do not contradict the existence of formats complete for classes of specifications that do not cover the mixed-GSOS format. Indeed as shown in~\cite{statontyft}, in the context of LTSs one can combine GSOS and coGSOS but restrict to specifications with positive premises only, and guarantee the existence of a corresponding distributive law. (Note that specifications used in Section~\ref{sec:qm2lts} rely on negative rule premises.)

Our proofs can be easily modified to show undecidability of other problems related to operational specifications, some of them phrased without reference to distributive laws. For example, in the case of LTSs, it is undecidable whether a transition system specification (or even a mixed-GSOS specification) has a supported model, a unique supported model, or a unique stable model~\cite{vGnegative}; the constructions needed for these are minor variations of the one used in Section~\ref{sec:qm2lts}.

In the case of stream systems, our results are related to studies of the productivity of stream definitions~\cite{endrullis-pebble}. Specifications used in Section~\ref{sec:qm2ss} can be seen as definitions in the ``pure stream specification format'' of~\cite{endrullis-pebble}. Indeed, that format is closely related to stream coGSOS extended with premise-less GSOS rules for constants. In~\cite{endrullis-pebble} it was proved that productivity of pure stream specifications is decidable for specifications that are data-oblivious, i.e., natural with respect to transition labels. Our specifications are not data-oblivious in that sense. It is easy to use the constructions of Section~\ref{sec:qm2ss} to prove that productivity of pure stream specifications becomes undecidable without data-obliviousness.

\noindent
{\bf Acknowledgment.} We are grateful to Jurriaan Rot for several helpful discussions, and to anonymous referees for spotting embarrassing mistakes both in the content and the presentation of our results.
\bibliographystyle{eptcs}
\bibliography{noformat}

\end{document}